\DeclareMathOperator*{\argmin}{arg\,min}
\newtheorem{theorem}{Theorem}
\renewcommand{\vec}[1]{\boldsymbol{#1}}
\begin{document}

\title{Multi-user 
Co-inference with Batch Processing Capable Edge Server}

\author{Wenqi Shi,
        Sheng Zhou,~\IEEEmembership{Member,~IEEE}, and
        Zhisheng Niu,~\IEEEmembership{Fellow,~IEEE}, \linebreak Miao Jiang, and Lu Geng

\thanks{This work is sponsored in part by the National Key R\&D Program of China 2018YFB1800800, the Nature Science Foundation of China (No. 61871254, No. 61861136003), and Hitachi Ltd.}
\thanks{W. Shi, S. Zhou, and Z. Niu are with the Beijing National
Research Center for Information Science and Technology, Department of Electronic Engineering, Tsinghua University, Beijing 100084,
China (e-mail: swq17@mails.tsinghua.edu.cn; sheng.zhou@tsinghua.edu.cn;
niuzhs@tsinghua.edu.cn).}
\thanks{M. Jiang and L. Geng are with Hitachi (China) Research \& Development Cooperation, Beijing 100190, China (e-mail: miaojiang@hitachi.cn; lgeng@hitachi.cn).}
}

\maketitle

\begin{abstract}
Graphics processing units (GPUs) can improve deep neural network inference throughput via batch processing, where multiple tasks are concurrently processed. 
\textcolor{black}{
We focus on novel scenarios that the energy-constrained mobile devices offload inference tasks to an edge server with GPU.
The inference task is partitioned into sub-tasks for a finer granularity of offloading and scheduling, and the user energy consumption minimization problem under inference latency constraints is investigated.}
\textcolor{black}{To deal with the coupled offloading and scheduling introduced by concurrent batch processing, we first consider an offline problem with a constant edge inference latency and the same latency constraint.
It is proven that optimizing the offloading policy of each user independently and aggregating all the same sub-tasks in one batch is optimal, and thus the independent partitioning and same sub-task aggregating (IP-SSA) algorithm is inspired.
Further, the optimal grouping (OG) algorithm is proposed to optimally group tasks when the latency constraints are different.}
Finally, when future task arrivals cannot be precisely predicted, a deep deterministic policy gradient (DDPG) agent is trained to call OG.
\textcolor{black}{Experiments show that IP-SSA reduces up to 94.9\% user energy consumption in the offline setting, while DDPG-OG outperforms DDPG-IP-SSA by up to 8.92\% in the online setting.}

\end{abstract}

\begin{IEEEkeywords}
Deep neural network (DNN) partitioning, computation offloading, batch processing, scheduling
\end{IEEEkeywords}

\section{Introduction}
With recent breakthroughs in deep learning, deep neural networks (DNNs) have been successfully applied in a wide range of artificial intelligence (AI) services and applications, including natural language processing \cite{otter2020survey}, autonomous driving \cite{grigorescu2020survey}, and content recommendation \cite{da2020recommendation}.
In the meantime, driven by the rapid development of sensing and communication capabilities of mobile devices (e.g., smartphones and autonomous driving vehicles), a large amount of data are generated.
Such AI applications usually have stringent latency constraints, which makes it impractical to
upload the distributed raw data to cloud servers for centralized processing, due to the limited network bandwidth.
Therefore, a new yet promising research area, called edge AI or edge learning has been unleashed, putting DNNs to the network edge \cite{sun2020edge}\cite{gunduz2020communicate}.

\textcolor{black}{Recently, graphics processing units (GPUs) and custom-designed DNN hardware accelerators (e.g., neural-network processing units, NPUs) have been widely used to accelerate the computing of DNNs, due to their great parallel computing capabilities.}
\emph{Batch processing} is an essential technique to better utilize parallelism,
where multiple tasks are aggregated into a batch and concurrently processed.
As a result, the throughput of DNN computing can be improved, for both inference \cite{choi2021lazy} and training \cite{you2019large}.

\textcolor{black}{
Conducting timely and reliable DNN inference at the resource-constrained network edge is challenging due to the high computation energy consumption.
For example, for L5 level autonomous driving vehicles, the connected and autonomous driving subsystem will consume 1200 Watts, and the major part is contributed by the computation \cite{brost2021energy}.
As a result, the autonomous driving can reduce electric vehicle range by up to 15\% \cite{mohan2020trade}.
To address this issue, we consider the scenarios that mobile devices can partition the DNN inference task into sub-tasks and offload a part of the sub-tasks to the edge server to save energy.
Motivated by fact that in applications like autonomous driving and smart manufacturing, mobile devices run the same DNN inference task (e.g., object detection DNNs), the edge server can be equipped with GPUs to improve the throughput of the offloaded sub-tasks via batch processing.
In such scenarios, the inference task offloading and scheduling policy is critical to reduce the user energy consumption.
}

\textcolor{black}{
In the literature, vast research efforts have been made to reduce the energy consumption of mobile users and ensure latency constraints of edge inference.}
Some researchers focus on reducing the on-board inference latency by DNN model compression \cite{deng2020model} or DNN hardware accelerators \cite{sze2017efficient}.
However, we follow an orthogonal vein of research, which is known as device-edge co-inference.
Neurosurgeon \cite{kang2017neurosurgeon} proposes the first co-inference framework, where the DNN is partitioned into two parts with layer-level granularity, and the later part is offloaded to the edge server.
\textcolor{black}{The main challenge is that the intermediate data size of DNNs can be large, making offloading bandwidth-consuming.}
Therefore, the authors of \cite{ko2018edge} propose JPEG-based feature coding to compress the intermediate data. 
Recently, joint source and channel coding (JSCC) for co-inference under noisy wireless channels has attracted research attentions \cite{shao2020communication, jankowski2020joint, shao2021learning}, where the feature compression and channel coding are jointly optimized.
On the other hand, some researchers modify the DNN architecture to avoid transmitting large intermediate data.
Edgent \cite{li2019edge} adds early-exit branches to the original DNN, so that the result with high confidence can be directly output.
Since the size of the result is smaller than that of the intermediate data,
such mechanism can reduce the transmission latency in co-inference, and is further extended to the graph neural networks (GNNs) in \cite{shao2021branchy}. 

However, all aforementioned works focus on the one-shot problem for single-user scenarios, while an edge server may serve multiple users with random task arrivals in realistic scenarios.
In \cite{hu2019dynamic}, the DNN inference tasks arrive at a fixed frequency.
The authors minimize the co-inference latency under light workload scenarios or maximize the co-inference throughput under heavy workload scenarios.
Further, the authors of \cite{song2021adaptive} extend Edgent to the scenarios with random task arrivals.
They propose a deep reinforcement learning agent to balance the trade-off between the number of tasks completed within the latency constraints and the inference accuracy.
Nevertheless, the edge server processes the offloaded tasks in a first-in-first-out (FIFO) manner in \cite{hu2019dynamic, song2021adaptive}.
\textcolor{black}{To the best of our knowledge, only \cite{tang2020joint} considers multi-user co-inference, wherein the DNN partitioning and edge server CPU cores allocation are jointly optimized to minimize the maximum inference latency among users.}
\textcolor{black}{However, for the promising DNN hardware accelerators such as GPUs and NPUs, \cite{hu2019dynamic, song2021adaptive, tang2020joint} cannot be directly applied, since the computing resource allocation and inference task scheduling would be completely different due to batch processing.}

The problem of batching inference tasks has been studied in the context of cloud computing, which is known as the DNN inference serving system \cite{zhang2019mark}.
Since inference tasks arrive at different time, batching policy should carefully balance the trade-off between the latency and throughput.
For instance, large batches can improve the throughput, while suffering from long latency of waiting for enough tasks.
In most existing works, batch processing is triggered according to the following  system parameters, batch size (i.e., the number of arrived tasks) and time window (i.e., the elapsed time after the last batch).
\textcolor{black}{To improve the performance, the thresholds of batch size and time window can be optimized \cite{ali2020batch, crankshaw2017clipper}.}
Nevertheless, since the whole inference task is offloaded to the cloud and the offloading latency is omitted in DNN inference serving systems, these approaches are not suitable for device-edge co-inference.

Motivated by the above observations, the problem of joint inference task offloading and scheduling for co-inference with batch processing capable edge server is important yet open.
\textcolor{black}{In this work, 
our goal is to minimize the energy consumption of mobile users while ensuring the latency constraints of the inference tasks.}
Due to the concurrent nature of batch processing, the main challenge is the coupling of the offloading decisions and the scheduling policy.
Since only the same sub-tasks can be aggregated into a batch, whether a user can offload a typical sub-task to the edge server not only depends on how many sub-tasks other users offload, but also depends on \emph{what sub-tasks other users offload}.
Further, since a batch cannot be started until all sub-tasks in the batch are ready, the starting time of a batch depends on the local computing capabilities and channel conditions of users, which can be time varying.

To solve this problem, we begin with an offline scenario as \cite{tang2020joint}.
\textcolor{black}{A joint offloading and scheduling problem is formulated, which is a mixed-integer non-linear programming problem and is NP-hard in general.}
\textcolor{black}{
To address the coupling between offloading and scheduling, we introduce two mild simplifications:}
1) for a typical sub-task, the inference latency of the edge server is not related to the batch size, and 2) all tasks have the same latency constraint.
For the simplified problem, we theoretically prove that each user independently decides its offloading decision, and the edge server aggregates all the same offloaded sub-tasks in one batch is optimal.
Then, we show that the optimal solution can be extended to two sophisticated scenarios that remove the simplifications, and accordingly propose two low-complexity algorithms (i.e., the IP-SSA algorithm and the OG algorithm).
\textcolor{black}{When the edge inference latency increases with the batch size, IP-SSA searches all potential batch size to derive the best batch size that minimizes the user energy consumption without violating the latency constraint.}
When tasks have different latency constraints, OG derives the optimal grouping policy via dynamic programming, where the offloading and scheduling policy of each group is given by IP-SSA.
Finally, we consider an online scenario that the arrivals of future tasks cannot be precisely predicted.
In order to seek a balance between serving arrived tasks and reserving resources for future tasks, a reinforcement learning (RL) agent is trained to schedule the offline algorithm, and a two-dimensional action space is proposed.
Our main contributions are summarized as follows.
\begin{itemize}
  \item{\textcolor{black}{To the best of our knowledge, this work is the first to consider the multi-user co-inference scenario with batch processing capable edge server. 
  We develop a formal framework to jointly optimize the offloading and scheduling to minimize the user energy consumption under inference latency constraints.}}
  \item{
  \textcolor{black}{For offline scenarios that all tasks have arrived and latency constraints are known, we analyze the structural properties of the optimal solution under two mild simplifications.
  Further, IP-SSA and OG are inspired for realistic scenarios that remove the simplifications.}}
  \item{
  For online scenarios, an RL agent is proposed. A two dimentional action space is designed to balance two key trade-offs: the trade-off between waiting latency and batch size, and the trade-off between processing time and idle period.
  }
  \item{We conduct extensive experiments on two DNNs, mobilenet-v2 \cite{sandler2018mobilenetv2} and 3dssd \cite{yang20203dssd}. 
  \textcolor{black}{The experiment results show that by utilizing batch processing, the proposed methods can greatly reduce the user energy consumption compared to conventional edge server computation resource allocation methods, especially when the number of users is large.}}
\end{itemize}

The remainder of this paper is organized as follows.
In Section II, we introduce the system model and formulate the problem.
\textcolor{black}{The structural properties of the optimal solution under two mild simplifications are provided in Section III.}
In Section IV, low-complexity algorithms for realistic offline scenarios are proposed, and further an RL agent is trained for the online scenario.
The experiment results are shown in Section V and we conclude the paper in Section VI.

\section{System Model and Problem Formulation}

\begin{figure}[!t] 
\setlength{\abovecaptionskip}{2pt}
\setlength{\belowcaptionskip}{2pt}
\centering 
\includegraphics[width=0.5\linewidth]{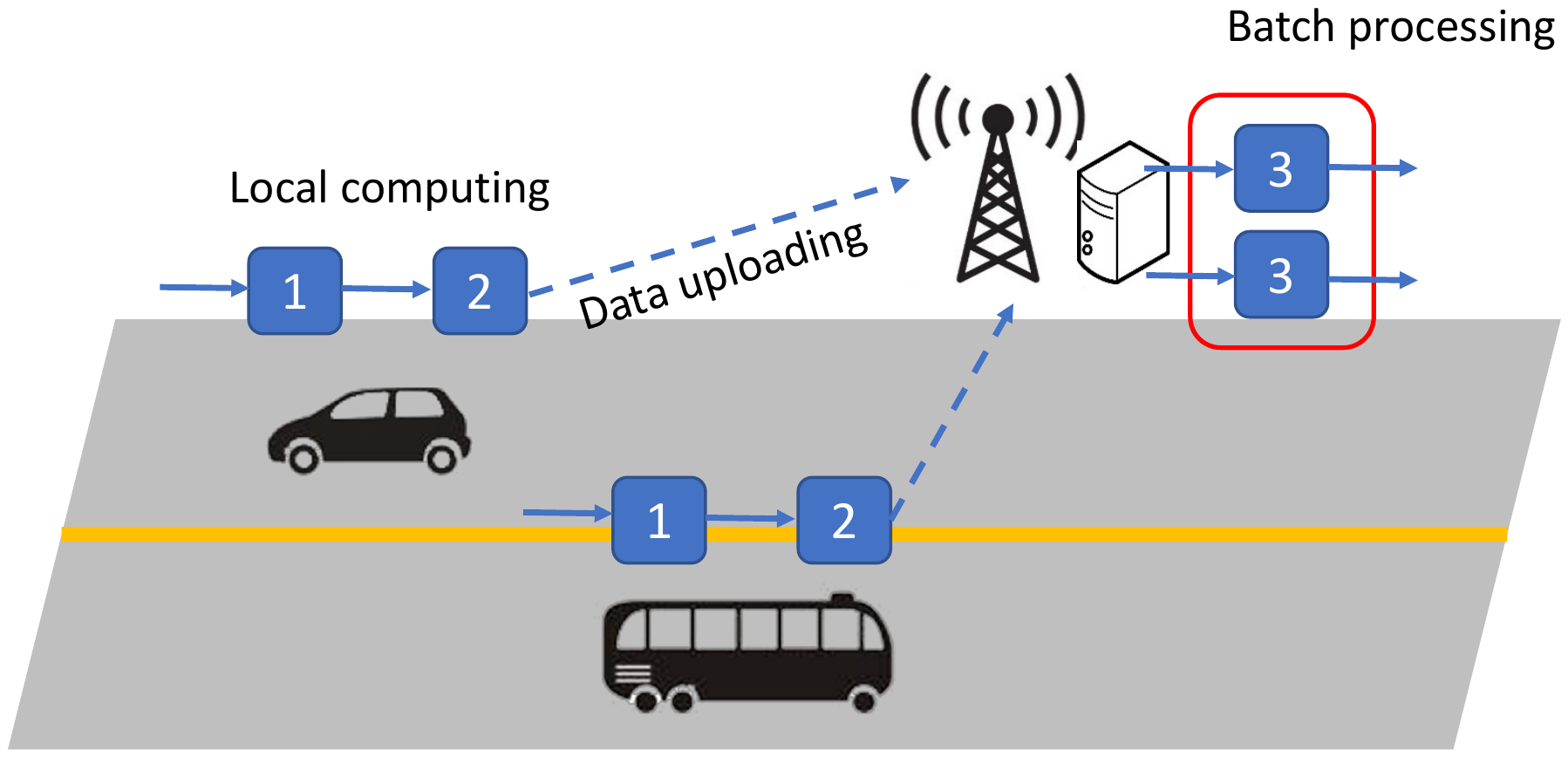} 
\caption{An illustration of the device-edge co-inference.} 
\label{system}
\vspace{-10pt}
\end{figure}

\textcolor{black}{Consider a device-edge co-inference system, consisting of $M$ mobile users and an edge server with one GPU \footnote{\textcolor{black}{For simplicity, we assume that the edge server is equipped with only one GPU. However, by assigning users to different GPUs, the proposed algorithm can be easily extended to the multiple GPUs scenario.}}.}
As shown in Fig. \ref{system}, all users have the same pre-trained DNN model, and perform inference tasks on their individual input data.
We use $l_m$ to denote the latency constraint of user $m$.
\textcolor{black}{Since the mobile devices should be functional without an edge server, we assume that each mobile device contain the entire DNN, and can locally complete the inference task within the latency constraint with high energy consumption.}
To reduce the energy consumption, each user can partition the whole DNN inference task into several sub-tasks, and offload a part of the sub-tasks to nearby edge servers.
For the offloaded sub-tasks, we assume that the edge server is equipped with one GPU, so the same sub-tasks from different users can be processed via batch processing to improve the throughput \cite{hanhirova2018latency, ren2020accelerating}.
The main notations are summarized in Table \ref{notation}.

\begin{table}[!t]
\setlength{\abovecaptionskip}{2pt}
\setlength{\belowcaptionskip}{2pt}
\caption{Summary of Main Notations}
\label{notation}
\begin{center}
\begin{tabular}{c p{0.68\linewidth}}
\hline
\textbf{Notation} & \textbf{Definition}\\
\hline
$M$; $l_m$ & Number of mobile users; Latency constraint \\
$N$; $A_n$; $B_n$ & Number of sub-tasks; Computing workload; Output data size\\
$x_{m,n,k}$; $f_m$ & Offloading and scheduling decision; Local computing frequency\\
$y_{m,n}^\text{u}$; $y_{m,n}^\text{d}$ & Indicator of data uploading; Indicator of data downloading\\
$R_m^\text{u}$; $R_m^\text{d}$ & Upload transmission rate; Download transmission rate\\
$p_m^\text{u}$; $p_m^\text{d}$ & Upload power consumption; Download power consumption\\
$b_{n,k}$; $F_n(\cdot)$ & Batch size; The function of edge computing latency w.r.t. batch size\\
$t_{m,n}$; $s_k$ &  Completing time; Batch starting time\\
\hline
\end{tabular}
\end{center}
\vspace{-10pt}
\end{table}

\subsection{DNN Inference Task Model}
We model the DNN inference task as \emph{a sequence of sub-tasks} in this paper, due to the fact that most modern DNNs are constructed by some basic layers
(e.g., convolution layer, fully-connected layer, batch-normalization layer).
Notice that some DNN modules (e.g., residual block \cite{he2016deep}) are not fully sequential, since they have parallel layers or bypass structures.
However, we can still use the sequential model by abstracting one or several consecutive modules as a sub-task, which also has been adopted in the literature \cite{tang2020joint}.
We assume the DNN inference task has $N$ sequential sub-tasks.
For a typical sub-task $n \in \{1, \dots, N\}$, $A_n$ is used to denote the computation workload of sub-task $n$.
We use $B_n$ to denote the output data size of the $n$-th sub-task, which is also the input data size of the $(n+1)$-th sub-task.
Furthermore, $B_0$ is used to denote the size of the input data of the first sub-task.

In the experiments, we use two different DNNs: a light-weight image classification DNN (mobilenet-v2 \cite{sandler2018mobilenetv2}), and a heavy-weight point cloud object detection DNN (3dssd \cite{yang20203dssd}). 
The partition of mobilenet-v2 and 3dssd is shown in Fig. \ref{arch}. \footnote{In this work, we put a partition point after each module.
Note that the DNNs can be partitioned at a finer granularity, say layers. However, this may increase the complexity of the problem, and is beyond the scope of this paper. 
}

\begin{figure}[!t] 
\setlength{\abovecaptionskip}{2pt}
\setlength{\belowcaptionskip}{2pt}
\centering 
\includegraphics[width=0.45\linewidth, angle=90]{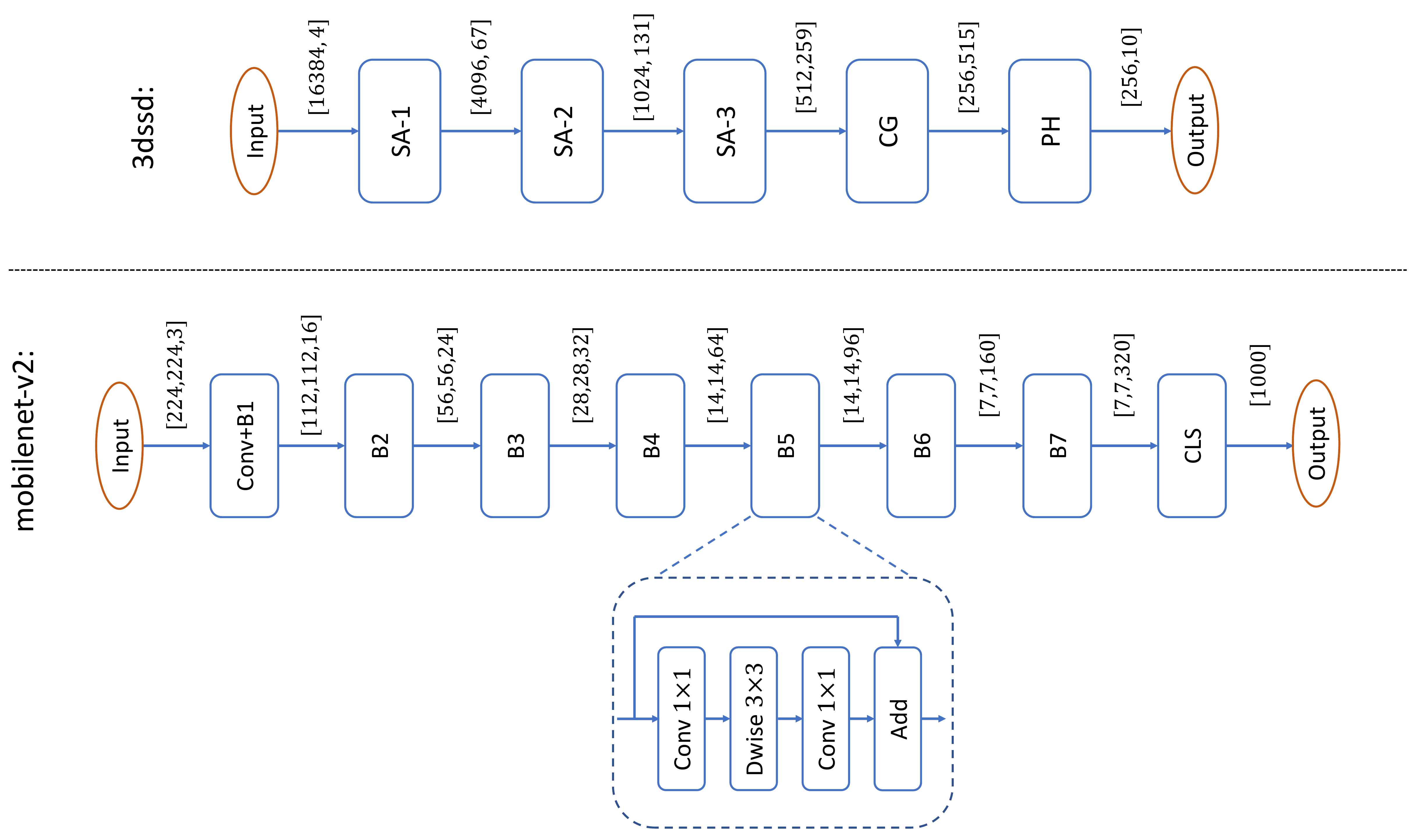} 
\caption{DNN architectures and partition points used in the experiments. For 3dssd, SA, CG, and PH are the abbreviations for set abstraction module, candidate generation module, and prediction head module. For mobilenet-v2, Conv, B, and CLS are the abbreviations for convolution layer, bottleneck module, and classification layer. The architecture of the fifth bottleneck module of mobilenet-v2 is illustrated, while the details of other modules can be found in the original papers \cite{sandler2018mobilenetv2,yang20203dssd}.
Further, Fig. \ref{arch} also shows the shape of the output data of each sub-task, which can be used to calculate the intermediate data size $B_n$.} 
\label{arch}
\vspace{-8pt}
\end{figure}

\subsection{Co-inference Model}
A binary variable $x_{m,n,k} \in \{0, 1\}$ is used to denote where and how each sub-task is processed.
Specifically, $x_{m,n,0} = 1$ indicates that the $n$-th sub-task of the $m$-th  user is processed locally.
Otherwise, the sub-task is offloaded to the edge server, and $x_{m,n,0} = 0$.
The value of $x_{m,n,k}, \forall k \geq 1$ and the formal definition of $x_{m,n,k}$ will be introduced in the next sub-section.

\subsubsection{Local Computation Model}
We use $f_m$ to denote the computing frequency of user $m$, which can be adjusted by applying the Dynamic Voltage and Frequency Scaling (DVFS) technique \cite{rabaey2003digital}. 
Since the computation workload of the $n$-th sub-task is denoted by $A_n$, the local computing latency is given by 
\begin{equation}
    l_{m,n}^\text{cp}(f_m) = \frac{A_n}{f_m}, \  f_{m,\text{min}} \leq f_m \leq f_{m,\text{max}}, \label{tcp}
\end{equation}
and the corresponding computing energy consumption is
\begin{equation}
    e_{m,n}^\text{cp}(f_m) = \kappa_m A_n f_m^2, \label{ecp}
\end{equation}
where $\kappa_m$ is the effective switched capacitance that depends
on the chip architecture of device $m$ \cite{yang2020energy}.
\textcolor{black}{Since the mobile devices should be able to locally complete the inference task within the latency constraint, we have $\frac{\sum \limits_{n=1}^N A_n}{f_{m,\text{max}}} \leq l_m $.}

\subsubsection{Intermediate Data Transmission Model}
Use $R_m^\text{u}$ to denote the uplink transmission rate of user $m$, then the transmission latency of uploading the output data of the $n$-th sub-task is 
\begin{equation}
    l_{m,n}^\text{u} = \frac{B_n}{R_m^\text{u}}. \label{tu}
\end{equation}
The corresponding uploading energy consumption is
\begin{equation}
    e_{m,n}^\text{u} = l_{m,n}^\text{u} p_m^\text{u}, \label{eu}
\end{equation}
where $p_m^\text{u}$ is the power consumption of the transmitter.

Similarly, the latency of downloading the intermediate output data of the $n$-th sub-task and corresponding energy consumption is given by $l_{m,n}^\text{d} = \frac{B_n}{R_m^\text{d}}$, and $e_{m,n}^\text{d} = l_{m,n}^\text{d} p_m^\text{d}$, respectively, where $R_m^\text{d}$ is the downlink transmission rate and $p_m^\text{d}$ is the power consumption of the receiver.

\subsection{Batch Processing Model}
For the offloaded sub-tasks, $x_{m,n,k} = 1, k\geq 1$ indicates that the sub-task is processed in the $k$-th batch by the edge server.
Obviously, each sub-task only needs to be processed once, and thus the maximum number of batches is $MN$.
Therefore, we have $ 1 \leq k \leq MN$.
Then, the offloading and scheduling decision variable $x_{m,n,k}$ can be formally defined as follows
\begin{equation}
x_{m,n,k}=\left\{
\begin{array}{lr}
1, \  \left\{
    \begin{array}{ll}
        \text{$k=0$, if processed locally}, \\
        \text{$1\leq k\leq MN$, if processed in the $k$-th batch by the edge server},
	\end{array}
	\right. \\
0, \  \text{otherwise}.
\end{array}
\right.
\end{equation}
Since each sub-task needs and only needs to be processed once, we have
\begin{equation}
    \sum_{k=0}^{M N} x_{m,n,k}=1. \label{C11}
\end{equation}
We use $b_{n,k}$ to denote the batch size of the $n$-th sub-task in the $k$-th batch, which is given by
\begin{equation}
    \sum_{m=1}^M x_{m,n,k}=b_{n,k}. \label{C12}
\end{equation}
In this work, we assume that only the same sub-task can be processed in one batch, and thus 
\begin{equation}
    b_{n,k} b_{n', k} = 0, \  \forall 1 \leq k \leq MN \; \text{and}\; n\neq n'. \label{C13}
\end{equation}

Further, $t_{m,n}$ is used to denote the completing time of the $n$-th sub-task of the $m$-th user, and $s_k$ is used to denote the starting time of the $k$-th batch.
Since a batch cannot be started until all sub-tasks in that batch are ready to be processed, we have
\begin{equation}
    s_k \geq t_{m,n-1} + \frac{B_{n-1}}{R_m^\text{u}} y_{m,n-1}^\text{u},  \ \forall (m,n,k) \in \{(m,n,k)| x_{m,n,k} = 1 \ \text{and} \ k\geq 1\}, \label{C14}
\end{equation}
where a sub-task are not ready until the previous sub-task is completed and the input data is uploaded (if needed). 
Here, we use a binary variable $y_{m,n-1}^\text{u}$ (defined in \eqref{C15}) to denote whether the 
output intermediate data of the $(n-1)$-th sub-task needs to be uploaded.
\begin{equation}
    y_{m,n-1}^\text{u} = \mathds{1} \{x_{m,n,0} - x_{m,n-1,0} < 0\}. \label{C15}\\
\end{equation}

We use a set of functions $F_n(\cdot)$ to characterize the relation between the computing latency of the edge server and the batch size for each sub-task. Fig. \ref{l_b} shows the profiling results of mobilenet-v2 and 3dssd on a GPU server, where the blue curves show $F_n(\cdot)$ for each sub-task and the red curves show the throughput improvement brought by batch processing.
\begin{figure}[!t]
\setlength{\abovecaptionskip}{2pt}
\setlength{\belowcaptionskip}{2pt}
\centering
\subfloat[]{
\includegraphics[width=0.48\linewidth]{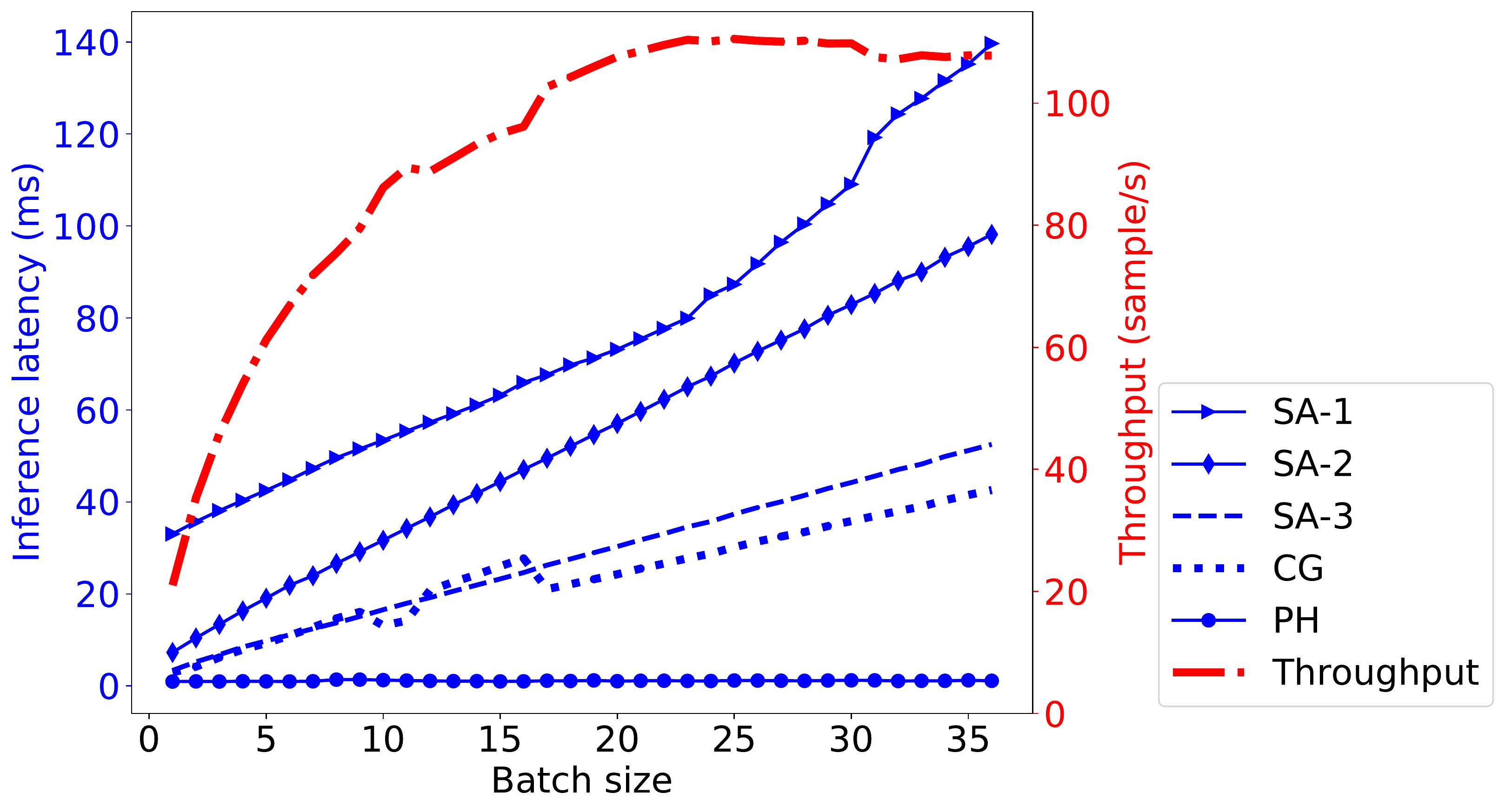}}
\label{l_b_3dssd}\hfill
\subfloat[]{
\includegraphics[width=0.48\linewidth]{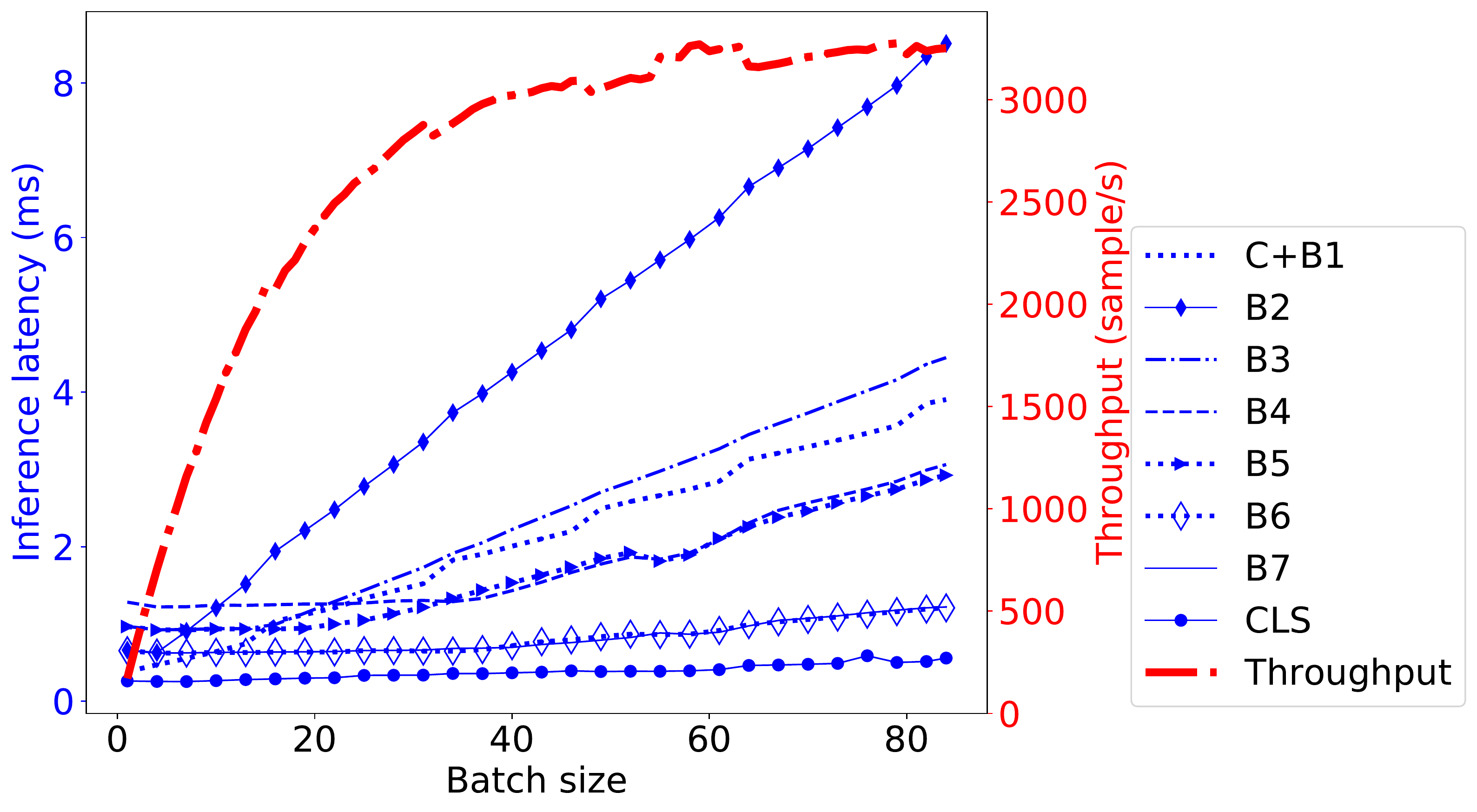}}
\label{l_b_mobilenetv2}
\caption{Profiling results for the sub-tasks in 3dssd (shown in Fig. \ref{l_b}. (a)) and mobilenet-v2 (shown in Fig. \ref{l_b}. (b)). Blue curves show the inference latency w.r.t. batch size for each sub-task (i.e., $F_n(\cdot)$), while red curves show the throughput of the entire inference task w.r.t. batch size. Implementing details will be introduced in the experiment part.}
\label{l_b} 
\vspace{-8pt}
\end{figure}
Besides, we have the edge server occupancy constraint
\begin{equation}
    s_{k+1} \geq s_k + F_n(b_{n,k}), \  \forall k \geq 1 \ \text{and} \ \forall n \in \{1, \dots, N\}, \label{C16}
\end{equation}
indicating that the next batch cannot be started until the current batch is completed. 
Notice that \eqref{C13} guarantees that for a typical batch $k$, there is at most one $\hat{n}$ among all $n \in \{1, \dots, N\}$ such that $b_{\hat{n},k} > 0$.
Therefore, if we define $F_n(0) = 0$, then \eqref{C16} can be satisfied for all $k \geq 1$ and $n \in \{1, \dots, N\}$.
Further, since the edge server has stronger computing capability than mobile devices, we assume that the edge computing latency when the batch size equals one is less than the minimum local computing latency, i.e., $F_n(1) \leq \frac{A_n}{f_{m,\text{max}}}$.

To ensure that for each user, the current sub-task will not be processed before the previous sub-task is completed, we have the following sub-task precedence constraint
\begin{align}
    t_{m,n} & \geq
    x_{m, n, 0} \underbrace{\left(t_{m,n-1} + \frac{B_{n-1}}{R_m^\text{d}} y^\text{d}_{m,{n-1}} + \frac{A_{n}}{f_m}\right)}_{\text{completing time if processed locally}} + x_{m, n, k} \underbrace{(s_k + F_n(b_{n,k}))}_{\text{completing time if offloaded}}, \  \forall k\geq 1, \label{C17}
\end{align}
where the binary variable $y^\text{d}_{m,n-1}$ is used to denote whether the 
output intermediate data of the $(n-1)$-th sub-task needs to be downloaded, and is defined as follows.
\begin{equation}
    y_{m,n-1}^\text{d} = \mathds{1} \{x_{m,n,0} - x_{m,n-1,0} > 0\}. \label{C18}
\end{equation}
In \eqref{C17}, the first term is the completing time of processing the sub-task locally, which consists of the completing time of the previous sub-task, the downloading latency of intermediate data (if needed), and the local computing latency.
While the second term is the completing time of offloading the sub-task, which equals the completing time of the batch that contains the sub-task.


\subsection{Problem Formulation}
\textcolor{black}{The user energy minimization problem under inference latency constraints is given by}
\begin{align}
    \underset{x_{m,n,k}, \, s_k, \, t_{m, n}, \, f_m}{\text{min}} \quad & \sum_{m=1}^M\sum_{n=1}^N \kappa_m A_n f_m^2 x_{m,n,0} + \frac{B_n}{R_m^\text{u}} p_m^\text{u} y_{m,n}^\text{u} + \frac{B_n}{R_m^\text{d}} p_m^\text{d} y_{m,n}^\text{d}
    \tag{P1}\label{P1}\\
    \text{s.t.} \qquad \quad \, \ & \eqref{C11}-\eqref{C18}, \nonumber \\
    & t_{m, N} \leq l_m, \label{C19}\\
    & f_{m,\text{min}} \leq f_m \leq f_{m,\text{max}}, \ t_{m,n} \geq 0, \ s_k \geq 0, \label{C20}\\
    & x_{m,0,0} = 1, t_{m,0} = 0. \label{C21}
\end{align}
The objective of \ref{P1} is the total user energy consumption, including the energy consumption of local computing, intermediate data uploading, and downloading.
\textcolor{black}{\eqref{C19} ensures that the latency constraint is satisfied, and \eqref{C21} means that the input data are ready for all users at $t=0$.}

Problem \ref{P1} is a mixed-integer nonlinear programming problem, where the nonlinearity is introduced by $F_n(\cdot)$.
Even if approximating $F_n(\cdot)$ by a linear function, mixed-integer programming problems are still NP-hard \cite{wolsey2007mixed}.
Besides, \ref{P1} contains an exploding solution space when the problem scales up.
For example, the number of possible solutions for $x_{m,n,k}$ is $2^{MN(MN+1)} \approx 2^{M^2 N^2}$, which makes commonly used heuristics (e.g., branch-and-bound) impractical for \ref{P1}.


\section{Simplified Problem and its Optimal Solution}
In this section, we introduce two simplifications to \ref{P1}: 1) all users have the same latency constraint, i.e., $l_m = l$, 2) the computing latency of edge server is not related to the batch size, i.e., $F_n(b_{n,k}) = F_n(1), \ \forall b_{n,k} \geq 1$. 

\begin{theorem}
\label{theorem1}
If simplification 1) and 2) hold, then \ref{P1} must have an optimal solution $x^*, s^*, t^*, f^*$ that satisfies: 

(1) \textcolor{black}{If the $n$-th sub-task of the $m$-th user is offloaded to the edge server, then all the following sub-tasks of the $m$-th user will be offloaded to the edge server (i.e., if $ x^*_{m,n,0} = 0$, then $x^*_{m,n',0}=0, \ \forall N \geq n' \geq n$).}

(2) The same sub-tasks that are offloaded to the edge server will be scheduled in the same batch (i.e., $b^*_{n,k}b^*_{n,k'}=0, \ \forall k, \ k' \geq 1  \ \text{and} \ k' \neq k$), and the batch starting time is given by 
\begin{equation}
\left\{
\begin{aligned}
    s^*_N &=  l - F_N(1), \\
    s^*_{N-1}   &=  s^*_N - F_{N-1}(1), \\
    &\vdots \\
    s^*_1  &= s^*_2 - F_1(1).
\end{aligned}
\right.
\label{s^*}
\end{equation}

(3) $f^*_m$ is the lowest device computing frequency that ensures the inference latency does not exceed the latency constraint $l$.
\end{theorem}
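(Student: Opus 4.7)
My plan is to establish the three properties by a sequence of exchange arguments, each transforming an arbitrary optimal solution of \ref{P1} under the two simplifications into one obeying the stated structure while preserving feasibility and weakly decreasing the objective. Under simplifications 1) and 2), the only inter-user coupling is through the shared edge timeline \eqref{C14}--\eqref{C16}, so per-user rewrites of the offloading pattern and global rearrangements of the batch schedule can be handled almost independently.

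For part (1), I would argue one user at a time. Fix any optimal solution and assume user $m$ has a non-suffix pattern, i.e., a maximal edge run $\{a,\dots,b\}$ followed by at least one locally processed sub-task. The exchange is to extend that run rightward, either merging it with the next edge run or continuing all the way to $N$ if no next run exists, so that the intervening local sub-tasks become edge. This move removes the edge-to-local download $B_b p_m^\text{d}/R_m^\text{d}$, deletes the block of local-computation terms $\kappa_m A_n f_m^2$ over the affected indices, and also removes the upload at the start of any merged successor run, so the energy change is non-positive. By the standing assumption $F_n(1)\le A_n/f_{m,\text{max}}\le A_n/f_m$, the added edge occupancy $\sum_n F_n(1)$ is bounded above by the local-computation time it replaces, so every downstream completion time in \eqref{C17} only shrinks and the exchange remains feasible. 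Iterating collapses every edge run of user $m$ into a single suffix run, which is precisely the structural form claimed.

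For part (2), I start from a solution already obeying part (1), so each user $m$ has a single cut $c_m$ and offloads only the suffix $c_m+1,\dots,N$. I first prove the single-batch property: if two distinct batches $k<k'$ both process sub-task $n$, merging them into one batch at time $s_k$ has edge duration $F_n(1)$ by simplification 2), and the occupancy constraint \eqref{C16} then allows every subsequent batch to be shifted earlier by $F_n(1)$ without violating \eqref{C14} or \eqref{C17}. Iterating removes all duplicates, after which each sub-task index corresponds to at most one non-empty batch. Next, I push these batches flush against the deadline using the recursion in \eqref{s^*}: the batch for sub-task $N$ ends exactly at $l$, and each earlier batch is placed immediately before the next. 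Under simplification 1) the common deadline makes this the latest feasible placement, and because user $m$ interacts with the schedule only through $s_{c_m+1}^*$ via the local-finish bound $s_{c_m+1}^*-B_{c_m}/R_m^\text{u}$, enlarging that bound allows $f_m$ to decrease and weakly reduces user energy.

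Part (3) now follows almost by inspection: with $x^*$ and $s^*$ fixed, user $m$'s objective reduces to $\kappa_m f_m^2\sum_{n=1}^{c_m} A_n$ plus terms independent of $f_m$, a strictly increasing function of $f_m$, while the only constraint on $f_m$ is that the local prefix complete by $s_{c_m+1}^*-B_{c_m}/R_m^\text{u}$; hence $f_m^*$ is the smallest frequency meeting that deadline. I expect the main obstacle to lie in part (1), where the iterated exchange must respect the edge timeline shared with other users. The bound $F_n(1)\le A_n/f_m$ is what makes each local-to-edge swap feasible for user $m$ in isolation, but threading many swaps through the global schedule, and verifying that no previously scheduled batch of another user is displaced in a way that violates its latency constraint, is the bookkeeping I expect to be most intricate.
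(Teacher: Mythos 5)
There are two genuine gaps. First, your part (1) exchange converts local sub-tasks of user $m$ into offloaded ones \emph{before} the batch schedule has been canonicalized, and you only check user $m$'s own timeline: the newly offloaded sub-tasks must be assigned to batches on the shared edge server, and if no existing batch for that sub-task index starts after user $m$'s ready time, a new batch has to be created, which can collide with other users' batches through the occupancy constraint \eqref{C16} (and joining an existing batch is only allowed if its start time is no earlier than user $m$'s ready time, by \eqref{C14}). You explicitly defer this as ``bookkeeping,'' but it is precisely the offloading--scheduling coupling the theorem is meant to untangle, and your sketch does not resolve it. The paper avoids the problem by reversing the order of your two steps: it first fixes the latest back-to-back schedule $s'_1,\dots,s'_N$ of \eqref{s^*} (feasible because $F_n(1)\le A_n/f_{m,\text{max}}$ and $\sum_n A_n/f_{m,\text{max}}\le l$), then argues by backward induction on the sub-task index, aggregating all offloaded copies of sub-task $n$ into the single batch at $s'_n$ -- a pure delay, hence always compatible with \eqref{C14} and with the common deadline -- and ruling out the ``offloaded then local'' pattern by showing $t^*_{m,n-1}<s'_n$, so that offloading the intervening local block is feasible and strictly cheaper, contradicting optimality. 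With the canonical schedule in place there is exactly one batch per sub-task index and no displacement issue ever arises.

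Second, your merging step in part (2) goes in the wrong direction: merging two batches $k<k'$ that process the same sub-task into one batch at the \emph{earlier} time $s_k$ can violate \eqref{C14}, since the members of batch $k'$ are typically there precisely because their uploads are not finished by $s_k$; likewise, shifting subsequent batches earlier is not automatically compatible with \eqref{C14}. The merge must go to the later (indeed the latest feasible) time, which is what the common deadline of simplification 1) permits and what the recursion \eqref{s^*} encodes. Your part (3) is fine once $x^*$ and $s^*$ are fixed, and your energy accounting in part (1) (saved download, saved local computation, saved re-upload) is correct; the unproven feasibility of those exchanges and the merge direction are the substantive defects.
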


\begin{proof}
See Appendix \ref{appendix1}.
\end{proof}

Theorem \ref{theorem1} greatly reduces the complexity of solving \ref{P1}. In particular, Theorem \ref{theorem1}. (2) gives the optimal scheduling and decouples the offloading decisions of different users, and Theorem \ref{theorem1}. (1), Theorem \ref{theorem1}. (3) can be utilized to derive the optimal offloading decision for each user.
Based on Theorem \ref{theorem1}, we propose Alg. \ref{alg1} to derive the optimal offloading and scheduling policy.

\begin{algorithm}[!t]
\setlength{\abovecaptionskip}{2pt}
\setlength{\belowcaptionskip}{2pt}
    \caption{Traverse algorithm to find the optimal solution for simplified \ref{P1}}
    \label{alg1}
    \begin{algorithmic}[1]
    \STATE {Derive $s_k^*$ according to \eqref{s^*}}
    \FOR{$m \in \{1, \dots, M\}$}
        \FOR{$n \in \{0, \dots, N\}$}
            \STATE {Derive $f_{m,n}$ for the $m$-th user with partition point $n$ according to \eqref{f^*}}
            \STATE {Compute the energy consumption $E_{m,n}$ of the $m$-th user with partition point $n$}
        \ENDFOR
        \STATE {Choose the best partition point for user $m$, i.e., $n^* = \argmin \limits_n E_{m,n}$}
        \STATE {\textcolor{black}{Compute the corresponding $E^*_m = E_{m,n^*}$, $x^*_{m,n,k}, \forall n, k$, $t^*_{m,n}, \forall n$, and $f^*_m$}}
    \ENDFOR
    \RETURN {\textcolor{black}{$E^*=\sum \limits_m E^*_m$, $s_k^*$, $x^*_{m,n,k}$, $t^*_{m,n}$, and $f^*_m$}}
    \end{algorithmic}
\end{algorithm}

In Alg. \ref{alg1}, the optimal batch starting time $s^*_k$ is first derived (step 1).
Due to the fixed $s^*_k$, the offloading decisions of different users are no longer coupled.
According to Theorem \ref{theorem1}. (1), the whole inference task is partitioned into two parts, and the part before the partition point is locally processed while the rest is offloaded to the edge server.
Therefore, we can traverse all possible partition points for a typical user $m$.
For the partition point at the $n$-th sub-task, the latest completing time of the $n$-th sub-task is $s^*_{n+1} - \frac{B_n}{R^\text{u}_m}$.
According to Theorem \ref{theorem1}. (3), the corresponding local computing frequency is 
\begin{equation}
\label{f^*}
     f_{m,n}= \left\{
    \begin{array}{ll}
        \frac{\sum \limits_{i=0}^n A_{i}}{s^*_{n+1} -  \frac{B_n}{R^\text{u}_m}}  & \text{, if $f_{m,\text{min}} \leq \frac{\sum \limits_{i=0}^n A_{i}}{s^*_{n+1} -  \frac{B_n}{R^\text{u}_m}} \leq f_{m,\text{max}}$}, \\
        f_{m,\text{min}}  & \textcolor{black}{\text{, if $0 < \frac{\sum \limits_{i=0}^n A_{i}}{s^*_{n+1} -  \frac{B_n}{R^\text{u}_m}} < f_{m,\text{min}}$} ,}\\
        \text{does not exist} & \textcolor{black}{\text{, otherwise}} ,
	\end{array}
	\right.
\end{equation}
where $\sum \limits_{i=0}^n A_{i}$ is the total local computing workload up to the $n$-th sub-task (step 4).
Further, the corresponding user energy consumption can be derived by subustituing the offloading decision and local computing frequency into the objective of \ref{P1} (step 5).
Finally, we can compare the energy consumption of all possible partition points to derive the optimal partition point (i.e., $n^*$), 
and the corresponding user energy consumption, offloading and scheduling decision, sub-task completing time, and local computing frequency (steps 7-8).

Step 4 and step 5 iterate for $\mathcal{O}(MN)$ times. 
The complexity of step 4 is on the magnitude of $\mathcal{O}(1)$ if a lookup table of $\sum \limits_{i=0}^n A_{i}$ is established.
The complexity of step 5 is the same as step 4, and the complexity of step 7 and step 8 are both $\mathcal{O}(N)$.
Therefore, the complexity of Alg. \ref{alg1} is $\mathcal{O}(MN)$.

\section{Extension to Sophisticated Scenarios}
In this section, we first investigate the question: can the two simplifications (i.e., the same latency constraint, and constant edge inference latency for a typical sub-task) be moved? 
The answer is yes, but the optimality no longer holds.
Two different algorithms are proposed, and the experiment results show that these algorithms perform well.
Then, an online scenario that the arrivals of future tasks cannot be precisely predicted is investigated.
A reinforcement learning (RL) agent is proposed to schedule the offline algorithms according to the task arrivals and the state of the edge server.

\subsection{Realistic $F_n(\cdot)$}
As shown in Fig. \ref{l_b}, the inference latency of the edge server increases with the batch size, especially when the DNN is computational intensive or the batch size is large.
Therefore, directly applying Alg. \ref{alg1} may violate the latency constraint.
Therefore, Alg. \ref{IPSSA} is proposed, and the main idea is to deal with the worst case.

\begin{algorithm}[!t]
\setlength{\abovecaptionskip}{2pt}
\setlength{\belowcaptionskip}{2pt}
    \caption{Independent partitioning and same sub-task aggregating (IP-SSA)}
    \label{IPSSA}
    \begin{algorithmic}[1]
    \STATE {\textcolor{black}{Initialize $E^*=\infty$}}
    \FOR{$b = M, M-1, \dots, 1$}
        \STATE {Derive $s_k^*(b)$ by replacing $F_n(1)$ with $F_n(b)$ in \eqref{s^*}}
        \STATE {\textcolor{black}{Call Alg. \ref{alg1} to derive a feasible solution $E^*(b)$, $x^*_{m,n,k}(b)$, $t^*_{m,n}(b)$, and $f^*_m(b)$ for $s_k^*(b)$}}
        \STATE {Compute the actual maximum batch size $b_\text{max} = \max \limits_{n} b^*_{n,k}(b)$ among all sub-tasks}
        \IF{\textcolor{black}{$b_\text{max} \leq b$ and $E^*(b) < E^*$}}
            \STATE {\textcolor{black}{$E^* \gets E^*(b)$, $s_k^* \gets s_k^*(b)$, $x^*_{m,n,k} \gets x^*_{m,n,k}(b)$, $t^*_{m,n}\gets t^*_{m,n}(b)$, and  $f^*_m \gets f^*_m(b)$}}
        \ENDIF
    \ENDFOR
    \RETURN {\textcolor{black}{$E^*$, $s_k^*$, $x^*_{m,n,k}$, $t^*_{m,n}$, and $f^*_m$}}
    \end{algorithmic}
\end{algorithm}

Assuming $b$ is the maximum batch size among all sub-tasks.
A feasible solution can be derived by calling Alg. \ref{alg1} after replacing $F_n(1)$ with $F_n(b)$ in \eqref{s^*}, which ensures the inference latency constraint when the maximum batch size is $b$ \textcolor{black}{(steps 3-4)}.
Then, the actual maximum batch size of the derived solution can be calculated as $b_\text{max} = \max \limits_{n} b^*_{n,k}(b)$ \textcolor{black}{(step 5)}.
We compare the actual maximum batch size $b_\text{max}$ and the assumed batch size $b$.
$b_\text{max} \leq b$ means that the derived solution for $b$ is feasible and does not violate the inference latency constraint, otherwise it is infeasible \textcolor{black}{(steps 6-8)}.
\textcolor{black}{In the beginning of Alg. \ref{IPSSA}, let $b$ equal the number of users $M$, and so we have the worst case solution, since a longer edge computing latency decreases the probability of offloading and thus increases the user energy consumption.
After that, we can traverse $b$ to derive the feasible solution that has the minimum user energy consumption.}
In Alg. \ref{IPSSA}, the loop iterates $M$ times.
In each iteration, the complexity is dominated by Alg. \ref{alg1}, which is $\mathcal{O}(MN)$.
Therefore, the complexity of Alg. \ref{IPSSA} is $\mathcal{O}(M^2N)$
\textcolor{black}{\footnote{\textcolor{black}{
Alg. \ref{alg1} can be extended to the scenarios that the tasks have different arrival time $t_{m,0}$, by replacing the $s^*_{n+1} - \frac{B_n}{R^\text{u}_m}$ term in \eqref{f^*} with $s^*_{n+1} - \frac{B_n}{R^\text{u}_m}-t_{m,0}$.
Further, Alg. \ref{IPSSA} can also be extended by applying the extended version of Alg. \ref{alg1}.}}}.

Although Alg. \ref{IPSSA} can derive a feasible solution, the optimality cannot be guaranteed due to the following reasons.
On one hand, $b$ is the maximum batch size among all sub-tasks. 
For those sub-tasks with smaller batch sizes, the batch starting time given by Alg. \ref{IPSSA} is earlier than the optimal one, and thus mobile devices may consume more energy.
On the other hand, Theorem \ref{theorem1} no longer holds when $F_n(b)$ increases with $b$.
Therefore, Alg. \ref{alg1} is not optimal, as is Alg. \ref{IPSSA}. 

\subsection{Different Latency Constraints}
Based on Alg. \ref{IPSSA}, we further consider the scenario that users have different latency constraints. 
Without loss of generality, we assume that the latency constraints satisfy $l_1 \leq l_2 \dots \leq l_M $.
Since Theorem \ref{theorem1} no longer holds when the latency constraints are different, and the original problem \ref{P1} is NP-hard, we propose a divide and conquer method.

Intuitively, the tasks with similar latency constraints can be processed in one batch with little performance loss.
Therefore, we aim to divide all tasks into groups, and call Alg. \ref{IPSSA} to derive the offloading and scheduling policy for each group. 
Use $\mathcal{G}_1, \mathcal{G}_2, \dots, \mathcal{G}_g$ to denote the grouping policy, where $\mathcal{G}_i, i\in \{1, 2, \dots, g\}$ is a set of users.
\textcolor{black}{For each group, we set the latency constraint equal to the minimum one among users in the group, i.e., }
\begin{equation}
\label{tildel}
    \tilde{l}_i = \min \limits_{m\in \mathcal{G}_i} l_m.
\end{equation} 
With $\tilde{l}_i$, all users in $\mathcal{G}_i$ can satisfy their latency constraints.
Without loss of generality, we assume that $\tilde{l}_1 \leq \tilde{l}_2 \dots \leq \tilde{l}_g$.
It is also assumed that the edge processing of adjacent groups does not overlap, i.e.,
\begin{equation}
\label{nonoverlap}
    \tilde{l}_i + \sum \limits_{n=1}^N F_n(|\mathcal{G}_{i+1}|) \leq \tilde{l}_{i+1}, \ \forall i\in \{1,2,\dots, g-1\},
\end{equation}
where $|\mathcal{G}_{i+1}|$ is the number of users in group $\mathcal{G}_{i+1}$, and thus $\sum \limits_{n=1}^N F_n(|\mathcal{G}_{i+1}|)$ is the maximum length of the edge occupancy period of $\mathcal{G}_{i+1}$. 
\textcolor{black}{Although assumption \eqref{nonoverlap} restricts certain grouping policies, it can greatly reduce the complexity by forcing the tasks with similar latency constraints to be grouped together, without degrading the performance too much.}

Instead of solving the original problem \ref{P1}, a grouping problem is considered in order to provide practical solutions based on the IP-SSA algorithm (i.e., Alg. \ref{IPSSA}).
The problem is: under assumptions \eqref{tildel} and \eqref{nonoverlap}, how to form groups to minimize the user energy consumption?

\begin{theorem}
\label{subsequent_users}
There exists an optimal grouping $\mathcal{G}^*_1, \mathcal{G}^*_2, \dots, \mathcal{G}^*_g$, where each group contains and only contains several subsequent users, and ordered by the index of the group, i.e., $\mathcal{G}^*_i=\{m_i, m_i+1, \dots, m_{i+1}-1\}$ and $1=m_1\leq m_2\leq \dots \leq m_{g+1}=M+1$. 
\end{theorem}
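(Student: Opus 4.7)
The plan is to prove the theorem by induction on the group index combined with a pairwise-swap exchange argument that exploits the sorted user indexing. Because the users are labeled so that $l_1 \leq l_2 \leq \ldots \leq l_M$, every group's effective latency satisfies $\tilde{l}_i = l_{\mu_i}$ with $\mu_i := \min_{m \in \mathcal{G}_i} m$, so comparing $\tilde{l}$-values across groups reduces to comparing smallest indices. Starting from an arbitrary optimal grouping, I first sort its groups so that $\tilde{l}_1^* \leq \ldots \leq \tilde{l}_g^*$, which, after an infinitesimal perturbation of the $l_m$'s to make them distinct, becomes the strict chain $\mu_1 < \mu_2 < \ldots < \mu_g$.

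The induction hypothesis is that $\mathcal{G}_1^*, \ldots, \mathcal{G}_{i-1}^*$ have already been reorganized, without changing total energy or breaking feasibility, into the contiguous block $\{1, \ldots, m_i - 1\}$. In the inductive step I first show that $m_i$ must lie in $\mathcal{G}_i^*$: if instead $m_i \in \mathcal{G}_j^*$ with $j > i$, then $\mu_j \leq m_i$ while the induction hypothesis forces $\mu_i \geq m_i + 1$, contradicting $\mu_i < \mu_j$. I would then populate $\mathcal{G}_i^*$ with the next consecutive users through a sequence of pairwise swaps: whenever some $m' \in \{m_i + 1, \ldots, m_i + |\mathcal{G}_i^*| - 1\}$ sits in a later group $\mathcal{G}_j^*$ while some $b > m_i + |\mathcal{G}_i^*| - 1$ sits in $\mathcal{G}_i^*$, I swap $m'$ and $b$. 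The swap preserves all group sizes (so the batch term in \eqref{nonoverlap} is unchanged), keeps $\mu_i = m_i$ (so $\tilde{l}_i^*$ is preserved), and keeps $\mu_j$ (hence $\tilde{l}_j^*$) unchanged as long as $m' \neq \mu_j$; the objective and every non-overlap constraint are then unaffected, and a simple inversion-count argument shows the sequence of such swaps terminates with $\mathcal{G}_i^* = \{m_i, \ldots, m_{i+1} - 1\}$.

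The main obstacle is the residual case $m' = \mu_j$, when the swap would lift $\tilde{l}_j^*$ to some $l_b > l_{\mu_j}$ and potentially tighten the non-overlap constraint \eqref{nonoverlap} between $\mathcal{G}_j^*$ and $\mathcal{G}_{j+1}^*$. The hard part will be arguing that these critical swaps can always be executed without loss. My plan is to perform every safe swap (those with $m' \neq \mu_j$) first, and then to observe that in the remaining configuration the new $\tilde{l}_j^*$ still obeys the sort order because it coincides with the next unassigned $l$-value in the residual problem, while the non-overlap slack can, if needed, be absorbed by merging $\mathcal{G}_j^*$ with $\mathcal{G}_{j+1}^*$\,---\,a move that weakly relaxes every deadline and therefore only decreases each user's IP-SSA energy (by the monotonicity of the per-user energy in the effective deadline, which follows from inspecting the objective of \ref{P1} and Alg.~\ref{alg1}). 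Finally, letting the perturbation of the $l_m$'s tend to zero and invoking the continuity of the IP-SSA energy and of the constraint \eqref{nonoverlap} in the latency arguments extends the result to the original (possibly tied) instance, yielding an optimal grouping of the desired contiguous form.
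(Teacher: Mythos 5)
There is a genuine gap at the heart of your exchange step. You swap a user $m'$ that lives in a \emph{later} (looser) group $\mathcal{G}^*_j$ with a user $b$ that lives in the \emph{earlier} (tighter) group $\mathcal{G}^*_i$, and you claim that because group sizes, $\tilde{l}_i$, and $\tilde{l}_j$ are preserved, ``the objective \ldots\ [is] unaffected.'' That does not follow: the objective is the sum of \emph{per-user} energies, and each user's energy depends on which effective deadline that particular user is assigned, through its own parameters ($R_m^{\mathrm{u}}$, $p_m^{\mathrm{u}}$, $\kappa_m$, $f_{m,\max}$, \ldots). After your swap, $m'$ must now meet the tighter deadline $\tilde{l}_i<\tilde{l}_j$ (its energy weakly increases, and the assignment may even be infeasible, since the paper only guarantees that user $m'$ can finish locally within its own $l_{m'}$, not within $\tilde{l}_i$), while $b$ gets a looser deadline (its energy weakly decreases); with heterogeneous users there is no reason the net change is zero or nonpositive, so the swap can strictly increase total energy and optimality is lost. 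The paper's proof is built precisely to avoid this: it never moves a user into a tighter group. It takes an inversion $j>k$ with $j\in\mathcal{G}^*_p$, $k\in\mathcal{G}^*_q$, $p<q$, and moves only the higher-indexed user $j$ one way, from the tighter group $\mathcal{G}^*_p$ into the looser group $\mathcal{G}^*_q$; since $\tilde{l}_p<\tilde{l}_q\le l_k\le l_j$, both groups' effective deadlines in \eqref{tildel} are unchanged, user $j$'s deadline is only relaxed, so the energy does not increase, and finitely many such moves eliminate all inversions.

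Your patch for the residual case $m'=\mu_j$ also fails as stated: merging $\mathcal{G}^*_j$ with $\mathcal{G}^*_{j+1}$ does not ``weakly relax every deadline'' --- by \eqref{tildel} the merged group inherits the \emph{minimum} constraint $\tilde{l}_j$, so every user formerly in $\mathcal{G}^*_{j+1}$ has its deadline tightened (energy weakly increases), and the merged group's larger size also changes the batch term in \eqref{nonoverlap} with the neighboring groups. Finally, the perturbation-and-limit step at the end leans on continuity of the IP-SSA energy in the latency constraints, which is not established (the offloading decisions are discrete and feasible partition points can disappear as deadlines shrink); in the paper's argument no perturbation is needed because ties are harmless for the one-directional move. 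So the overall strategy can be salvaged, but only by replacing your bidirectional swaps with one-way relocations of higher-indexed users into looser groups, which is exactly the paper's argument.
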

\begin{proof}
See Appendix \ref{appendix2}.
\end{proof}

Theorem \ref{subsequent_users} confirms the intuition that the tasks with similar latency constraints should be grouped together. 
Further, the optimal grouping policy can be derived by dynamic programming based on Theorem \ref{subsequent_users}.
First, we introduce two auxiliary variables:
$S_{i,j}$ is used to denote the minimum user energy consumption for tasks $\{1,2,\dots,j\}$, where the first task in the last group is the $i$-th task.
$G_{i,j}$ is used to denote the output user energy consumption of calling Alg. \ref{IPSSA} for group $\mathcal{G}=\{i, i+1, \dots, j\}$ with $\tilde{l} = l_i$.
The dynamic programming algorithm for optimal grouping is shown in Alg. \ref{OG}.

\begin{algorithm}[!t]
\setlength{\abovecaptionskip}{2pt}
\setlength{\belowcaptionskip}{2pt}
    \caption{Dynamic programming algorithm for optimal grouping (OG)}
    \label{OG}
    \begin{algorithmic}[1]
    \STATE {Initialize $S_{1,1}=G_{1,1}$}
    \FOR {$i \in \{1, 2, \dots, M-1\}$}
        \FOR {$j \in \{i+1, i+2, \dots, M\}$}
                \STATE {$S_{i,j} = S_{i,i} - G_{i,i} + G_{i,j}$}
        \ENDFOR
        \STATE {$\mathcal{D}=\{i'|l_{i'}+\sum \limits_{n=0}^N F_n(i+1-i') \leq l_{i+1}\}$}
        \IF {$\mathcal{D} \neq \emptyset$}
            \STATE {$S_{i+1,i+1} = \min \limits_{i' \in \mathcal{D}} \{S_{i',i}\}+ G_{i+1, i+1}$}
        \ELSE 
            \STATE {$S_{i+1,i+1} = +\infty $}
        \ENDIF
    \ENDFOR
    \RETURN {$\min \limits_{i\in\{1, 2,\dots, M\}} S_{i,M}$} 
    \end{algorithmic}
\end{algorithm}

In Alg. \ref{OG}, we first initialize $S_{1,1}=G_{1,1}$. 
Since if there is only one task, the optimal grouping is trivial, and the optimal user energy consumption is given by calling Alg. \ref{IPSSA} for the task.
According to the definition, $S_{i,j}$ consists of two parts, the user energy consumption of the tasks in the last group (i.e., $G_{i, j}$) and the user energy consumption of all other groups (i.e., $S_{i,j} - G_{i, j}$).
Therefore, given $S_{i,i}$, step 4 can provide $S_{i,j}$, since the grouping policy that gives $S_{i,i}$ is the same as that of $S_{i,j}$ except the last group.
To derive $S_{i+1,i+1}$, we first need to derive the feasible region of the second last group, where the latency constraint of the second last group cannot be too close to that of the last group according to \eqref{nonoverlap}.
As a result, $\mathcal{D}$ is used to denote the set of all feasible indices of the first task in the second last group (step 6). 
If $\mathcal{D}$ is not empty, $S_{i+1,i+1}$ can be calculated as the optimal user energy consumption of tasks $\{1,2,\dots,i\}$ (i.e., $\min \limits_{i' \in \mathcal{D}} \{S_{i',i}\}$) plus the energy consumption of the $(i+1)$-th task (step 8).
Otherwise, it means that the latency constraint of the $(i+1)$-th task is too close to that of the previous tasks, and thus form a new group for the $(i+1)$-th task is infeasible (step 10).
Finally, the optimal grouping can be derived by traversing all possible indices of the first task in the last group (step 13).

\textcolor{black}{The complexity of calling Alg. \ref{IPSSA} to derive $G_{i,j}$ for a typical pair of $(i, j)$ is $\mathcal{O}\left((j-i+1)^2N\right)$, which is no higher than $\mathcal{O}(M^2N)$.}
Thus, the complexity of deriving $G_{i,j}$ for all $(i, j)$ is $\mathcal{O}(M^4N)$.
Step 4 iterates for $\mathcal{O}(M^2)$ times, and the complexity of each iteration is $\mathcal{O}(1)$.
Steps 6-11 iterate for $\mathcal{O}(M)$ times.
The complexity of step 6 is $\mathcal{O}(M)$, and so is step 8.
As a result, the complexity of Alg. \ref{OG} is $\mathcal{O}(M^4N)$.

\subsection{Online Solution}
In real co-inference scenarios, future task arrivals may not be precisely predicted.
Therefore, we need to design online algorithm to make decisions on-the-fly.
In general, the online algorithm seeks the balance between \emph{serving arrived tasks} and \emph{reserving resources for future tasks}, which is affected by the following two trade-offs.
First, there is a trade-off between waiting latency before starting a batch and the batch size.
Since if the edge server chooses to wait, the batch size may be increased, which provides a higher inference throughput.
Second, there is a trade-off between the processing time of current batch and the idle period of the edge server.
Since if we enlarge the processing time of the current batch, the user energy consumption can be reduced.
However, the edge server will be occupied for a longer period, which sacrifices the capability of serving future tasks.
In the following part, we first model it via Markov decision process (MDP), and then propose a reinforcement learning (RL) agent to solve it.

Consider a slotted time system $t=1,2,\dots$, with time slot length $T$.
There are $M$ users in the system, and the task arrival of each user follows a random process $\mathcal{A}_m$.
The latency constraint of a task cannot be known unless it has arrived.
\textcolor{black}{We assumed that each user has a task buffer.
If a new task arrives before the latency constraint of the last task, it will be kept in the task buffer.}
The key components of the MDP are as follows.
\begin{itemize}
    \item {State:} 
    We use $\Vec{s}_t \triangleq [\Vec{l}_t, o_t]$ to denote the system state, where $\Vec{l}_t=[l_{1,t},\dots,l_{M,t}]$ is the latency constraints of all users in the $t$-th time slot ($l_{i,t}=0$ means the $i$-th user does not have  task to process), and $o_t$ is the busy period of the edge server (will be introduced in the state transition part).
    \item {Action:} 
    The two-dimensional action space is $\Vec{a}_t \triangleq [c_t, l_\text{th}]$, where $c_t\in \{0, 1, 2\}$ denotes doing nothing, letting the users to locally process their tasks, or calling OG, respectively. 
    \textcolor{black}{Further, $l_\text{th}$ is used to control the busy period of the edge server.}
    If the action is to call OG, i.e., $c_t=2$, for the tasks that have larger latency constraints (i.e., $l_{i,t}\geq l_\text{th}$), we force them to be completed before $l_\text{th}$ to reduce the busy period of the edge server. The two elements of the action vector are used to balance the two aforementioned trade-offs, respectively.
    \item {State transition:} 
    When the action is to do nothing,
    all non-trivial elements in the state vector are reduced by the time slot length, i.e., $l_{m,t+1} = l_{m,t} - T, \text{ if } l_{m,t}>0$ and $o_{t+1} = o_t - T, \text{ if } o_t>0$.
    If the action is to locally process or call OG, then $\Vec{l}_{t+1}=\vec{0}$ since all tasks will be processed.
    Besides, if calling OG, the busy period of edge server is $o_{t}=\tilde{l}_g$, where $\tilde{l}_g$ is the latency constraint of the last group in the optimal grouping policy given by OG.
    \item {Reward:} 
    The reward is defined as $r_t \triangleq -E(\Vec{s}_t, a_t)-C(\Vec{l}_t)$, where $E(\cdot)$ denotes the user energy consumption, and $C(\Vec{l}_t)$ denotes the cost.
    If the agent takes $c_t=0$ for a while, some tasks may have urgent latency constraints.
    Since the minimum latency of user $m$ to locally process the task is $l^\text{cp}_{m}(f_{m,\text{max}})=\frac{\sum_n A_n}{f_{m,\text{max}}}$.
    In order to ensure that those urgent tasks do not violate their delay constraints, 
    we assume that the user will locally process the task if the latency constraint cannot be satisfied in the next slot, i.e.,  $l_{m,t+1} \overset{c_t=0}= l_{m,t} - T < l^\text{cp}_{m}(f_{m,\text{max}})$.
    Therefore, we have $C(\Vec{l}_t) = \sum \limits_{m=1,\dots,M} e^\text{cp}_{m}(f_{m,\text{max}}) \mathds{1}\{(l_{m,t} - T) < l^\text{cp}_{m}(f_{m,\text{max}})\} $.
\end{itemize}

Notice that the state transition is complicated, since it is related to both the action and the output of Alg. \ref{OG}.
Therefore, theoretically solving this MDP is hard, we use RL to learn when and how to call OG.
Specifically, we use deep deterministic policy gradient (DDPG) \cite{lillicrap2016continuous}, since $l_\text{th}$ is continuous and DDPG is suitable for continuous control \textcolor{black}{\footnote{\textcolor{black}{
To derive $c_t$, we discretize the continuous output of the DDPG agent in equal width, and train the DDPG agent via the standard method provided in \cite{lillicrap2016continuous}.
Although in multiple GPUs or multiple servers scenarios, the complicated discrete-continuous hybrid action space may increase the difficulty of RL training, some advanced RL agents designed for hybrid action space can be applied \cite{neunert2020continuous, 10.5555/3367243.3367356}.
}}}.

\begin{figure}[!t] 
\setlength{\abovecaptionskip}{2pt}
\setlength{\belowcaptionskip}{2pt}
\centering 
\includegraphics[width=0.6\linewidth]{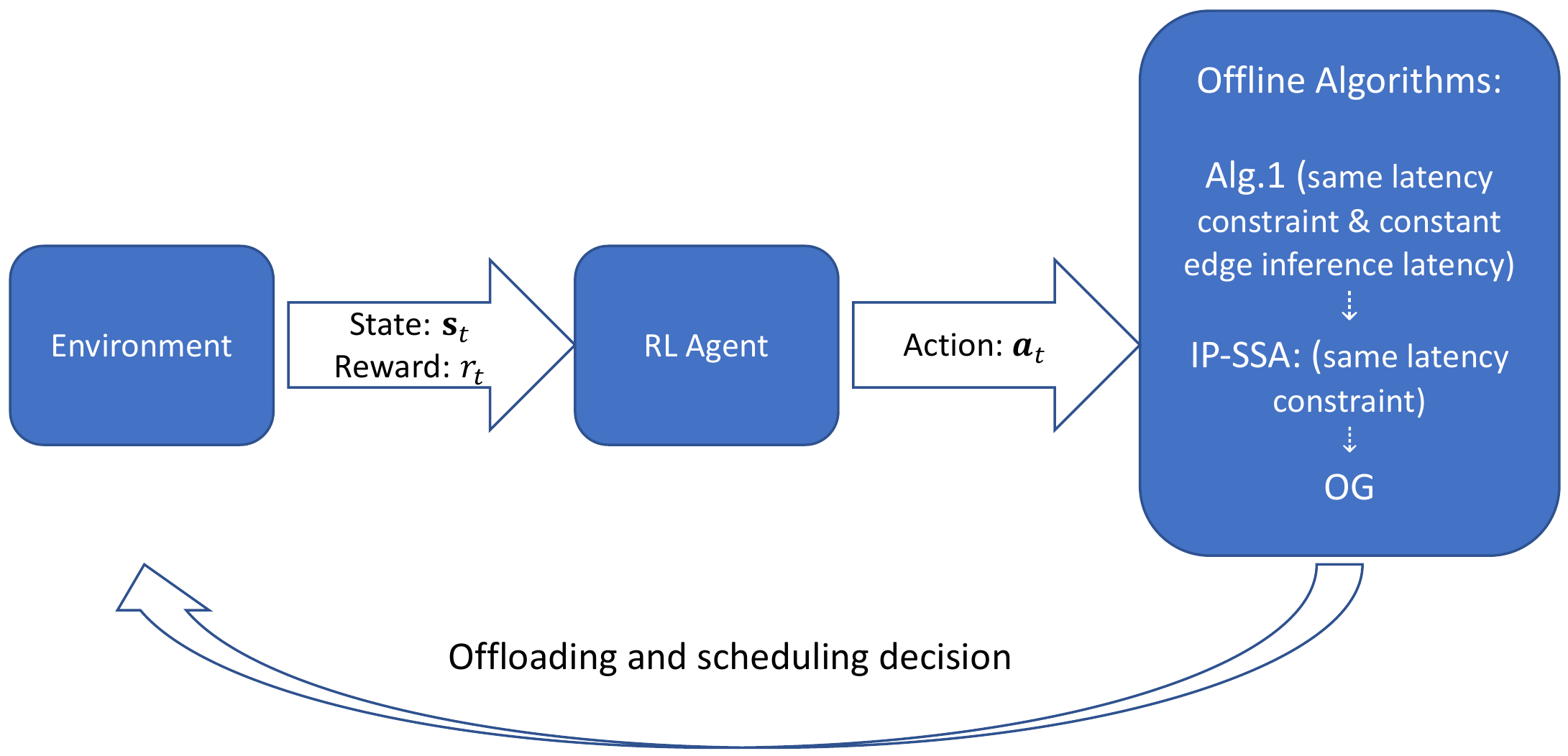} 
\caption{Illustration of the overall framework.} 
\label{rl}
\vspace{-8pt}
\end{figure}

In summary, the overall framework is shown in Fig. \ref{rl}.
For offline scenarios that all tasks have arrived and the latency constraints are known, Alg. 1, IP-SSA, and OG are proposed. 
Specifically, Alg. 1 is proven to be optimal when all tasks have the same latency constraint and the edge inference latency is not related to the batch size.
Then, Alg. 1 is extended to IP-SSA by finding the maximum feasible batch size in a greedy manner, for the scenario with the same latency constraint and increasing edge inference latency.
Considering different latency constraints, IP-SSA is further extended to OG by grouping the tasks with similar latency constraints.
When future task arrivals cannot be precisely predicted, a DDPG agent is trained to call OG, according to the latency constraints of arrived tasks, previous reward, and the occupancy of the edge server.
The agent can adaptively seek the balance between serving arrived tasks and reserving resources for future tasks via the proposed two-dimensional action space.

\section{Experiment Results}
\subsection{Experiment Setup}
We evaluate the proposed algorithms on two different DNNs, mobilenet-v2 \cite{sandler2018mobilenetv2} and 
3dssd \cite{yang20203dssd}.
The architecture and sub-task partitioning are shown in Fig. \ref{arch}.
We use the standard pretrained mobilenet-v2 model (on ImageNet data set) from pytorch \cite{paszke2019pytorch}, and train a 3dssd model (on KITTI data set) based on an open source LiDAR-based 3D object detection project \cite{openpcdet2020}.
The DNNs are profiled with different batch sizes on an NVIDIA RTX3090 \cite{3090}.
We use all data samples from test data set for inference, and record average inference latency for each sub-task.
The profiling results are shown in Fig. \ref{l_b}.

\subsection{Experimental Settings}
We assume that the users are uniformly distributed in a circular area of radius $R$, with an edge server located at the center.
The transmission rate is assumed to reach the Shannon capacity, i.e., ${R_m^\text{u}}=W_m \text{log}_2(1+\frac{\hat{p}^\text{u}_m h^2_m}{W_m N_0})$.
Here, $W_m$ denotes the wireless bandwidth, $h_m$ denotes the channel gain, $N_0$ is the noise power density, and $\hat{p}^\text{u}_m$ is the transmit power, which is typically much smaller than the power consumption of the transmitter, i.e., $p_m^\text{u}$.
The path loss is $128.1 + 37.6 \text{log}_{10} d$ ($d$ is the distance,
in km), and the standard deviation of shadow fading is 8 dB.

In order to derive the local computing energy and latency, the values of $f_{m,\text{min}}, f_{m,\text{max}}, \kappa_m$ are required.
However, these parameters are related to the computing hardware, and mobile users can have different types of computing hardware (e.g., CPU or GPU).
Therefore, we choose an alternative way.
The \emph{energy efficiency} is defined as the computation workload that can be completed with unit computing energy, i.e. $\mathcal{E}(f) \triangleq \frac{A}{e^\text{cp}(f)}$.
Note that both $\mathcal{E}$ and $e^\text{cp}$ are functions of the computing frequency $f$.
The energy efficiency of different computing hardware at the maximum computing frequency is reported in \cite{guo2019dl}, and the edge computing energy consumption can be estimated by multiplying the inference latency and GPU power consumption.
Therefore, we have the user energy consumption of the $n$-th sub-task as follows
\begin{equation}
\label{efmax}
    e^\text{cp}_{m,n}(f_{m,\text{max}}) = \frac{A_n}{\mathcal{E}_m (f_{m,\text{max}})} = \frac{\mathcal{E} _\text{e}(f_{\text{e},\text{max}})}{\mathcal{E}_m (f_{m,\text{max}})} e^\text{cp}_{n,\text{e}}(f_{\text{e},\text{max}}) = \frac{\mathcal{E} _\text{e}(f_{\text{e},\text{max}})}{\mathcal{E}_m (f_{m,\text{max}})} F_n(1)P_\text{e},
\end{equation}
where the subscript e means the edge server, and $P_e$ is the power consumption of the GPU.
Further, due to the different numbers of GPU cores and transistors, different series of GPUs may have different computing capabilities.
Nevertheless, the energy efficiency of different GPUs is almost the same for the same microarchitecture, e.g., GTX 1650 Super and GTX 1660 Ti with Turing microarchitecture \cite{guo2019dl}.
As a result, we define a parameter $\alpha$ as the ratio of local inference latency and edge inference latency, both at the maximum frequency
\begin{equation}
\label{alpha}
    \alpha_m \triangleq \frac{\frac{A_n}{f_{m,\text{max}}}}{F_n(1)} = \frac{l^\text{cp}_{m,n}(f_{m,\text{max}})}{F_n(1)}.
\end{equation}
Here $\alpha_m$ characterizes the local computing capability of user $m$. Then combining \eqref{alpha} with \eqref{tcp} and \eqref{ecp}, we have
\begin{equation}
\label{ef}
    e^\text{cp}_{m,n}(f_m) = \kappa_m A_n f^2_{m,\text{max}} \left( \frac{f_m}{f_{m,\text{max}}} \right)^2 = e^\text{cp}_{m,n}(f_{m,\text{max}}) \left( \frac{l^\text{cp}_{m,n}(f_{m,\text{max}})}{l^\text{cp}_{m,n}(f_m)} \right)^2
    = \frac{e^\text{cp}_{m,n}(f_{m,\text{max}}) \alpha_m ^2 F_n(1)^2}{l^\text{cp}_{m,n}(f_m)^2}.
\end{equation}
Substituting \eqref{efmax} into \ref{ef}, we can derive the user energy consumption at any $f_m$.

For 3dssd, we assume that mobile devices use GPUs for local computing, while CPUs are used for mobilenet-v2, respectively.
The default parameters are listed in Table \ref{parameter_offline}.

\begin{table}[!t]
\setlength{\abovecaptionskip}{2pt}
\setlength{\belowcaptionskip}{2pt}
\caption{System Parameters of the Offline Setting}
\label{parameter_offline}
\begin{center}
\begin{tabular}{|c |c |c |c|}
\hline
\textbf{Parameter} & \textbf{Value} & \textbf{Parameter} & \textbf{Value}\\
\hline
$R$ & 100 m & $\alpha_m$ & 1   \\
$W_m$ & 1 MHz & $N_0$ & -174 dBm/Hz \\
$\hat{p}^\text{u}_m$ & 0.05 W & $\mathcal{E} _\text{e}(f_{\text{e},\text{max}})$ & 48.75 Gop/W  \\
$p^\text{u}_m$ & 1 W & $\mathcal{E}_m(f_{m,\text{max}})$ for mobile GPU & 48.75 Gop/W  \\
$P_e$ & 300 W & $\mathcal{E}_m(f_{m,\text{max}})$ for mobile CPU & 0.3415 Gop/W \\
\hline
\end{tabular}
\end{center}
\vspace{-15pt}
\end{table}

\subsection{Offline Setting}
In this section, the proposed IP-SSA algorithm is evaluated, where all inference tasks have arrived and the latency constraints are the same.
We set $l_m=250$ ms for 3dssd, and $l_m=50$ ms for mobilenet-v2, respectively.
IP-SSA is compared with following benchmarks:
\begin{itemize}
    \item Local computing (LC): All users locally process the inference tasks.
    \item Offloading with processing sharing (PS): \textcolor{black}{All users evenly share the computing resources of the edge server, i.e., the edge computing latency of the $n$-th sub-task becomes $MF_n(1)$.
    Each user independently traverses all possible partition points of its DNN, and derives the partition point and the corresponding local computing frequency (similar to \eqref{f^*}) that minimize the user energy consumption.}
    \item Offloading with first-in-first-out (FIFO): \textcolor{black}{The edge server processes the offloaded sub-tasks in a FIFO manner. 
    The users are sorted by their transmission rate in descending order, and each user traverses all possible partition points of its DNN to derive the partition point that minimizes the user energy consumption. 
    Here, once a user offloads, the corresponding time period of edge processing is occupied and cannot be used by following users.
    We set $f_m=f_{m,\text{max}}$ to allow the edge server to process the most sub-tasks.
    }
    \item IP-SSA with no DNN partitioning (IP-SSA-NP): The whole DNN inference task is treated as one sub-task, and use IP-SSA to derive offloading and scheduling decisions.
\end{itemize}
Note that PS and FIFO only use simple scheduling policies, and thus can be used to evaluate the performance improvement of batch processing.
The improvement brought by DNN partitioning can be evaluated by comparing IP-SSA to IP-SSA-NP.

\begin{figure}[!t]
\setlength{\abovecaptionskip}{2pt}
\setlength{\belowcaptionskip}{2pt}
\centering
\subfloat[]{
\includegraphics[width=0.45\linewidth]{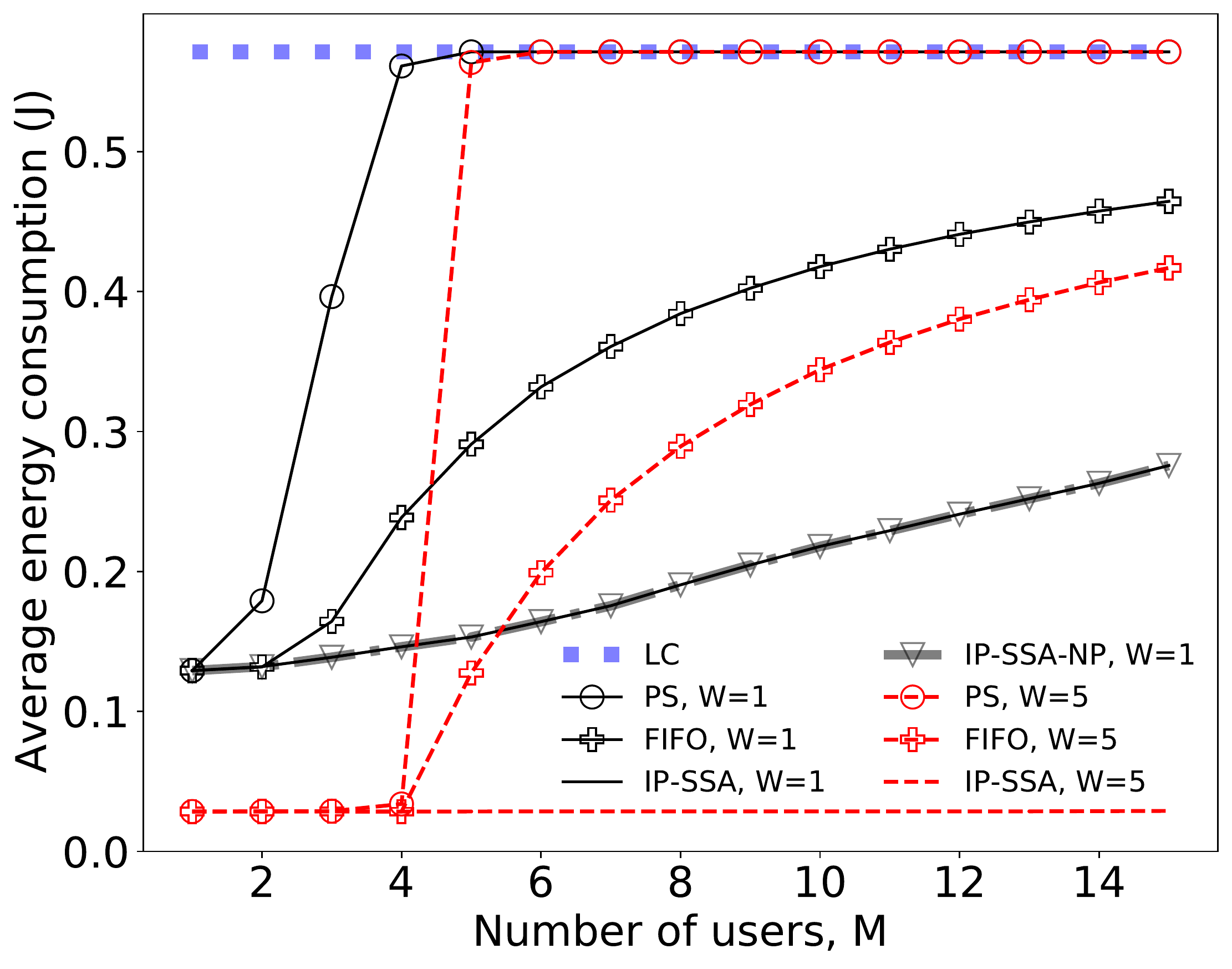}}
\label{o_b_3dssd}
\hfill
\subfloat[]{
\includegraphics[width=0.45\linewidth]{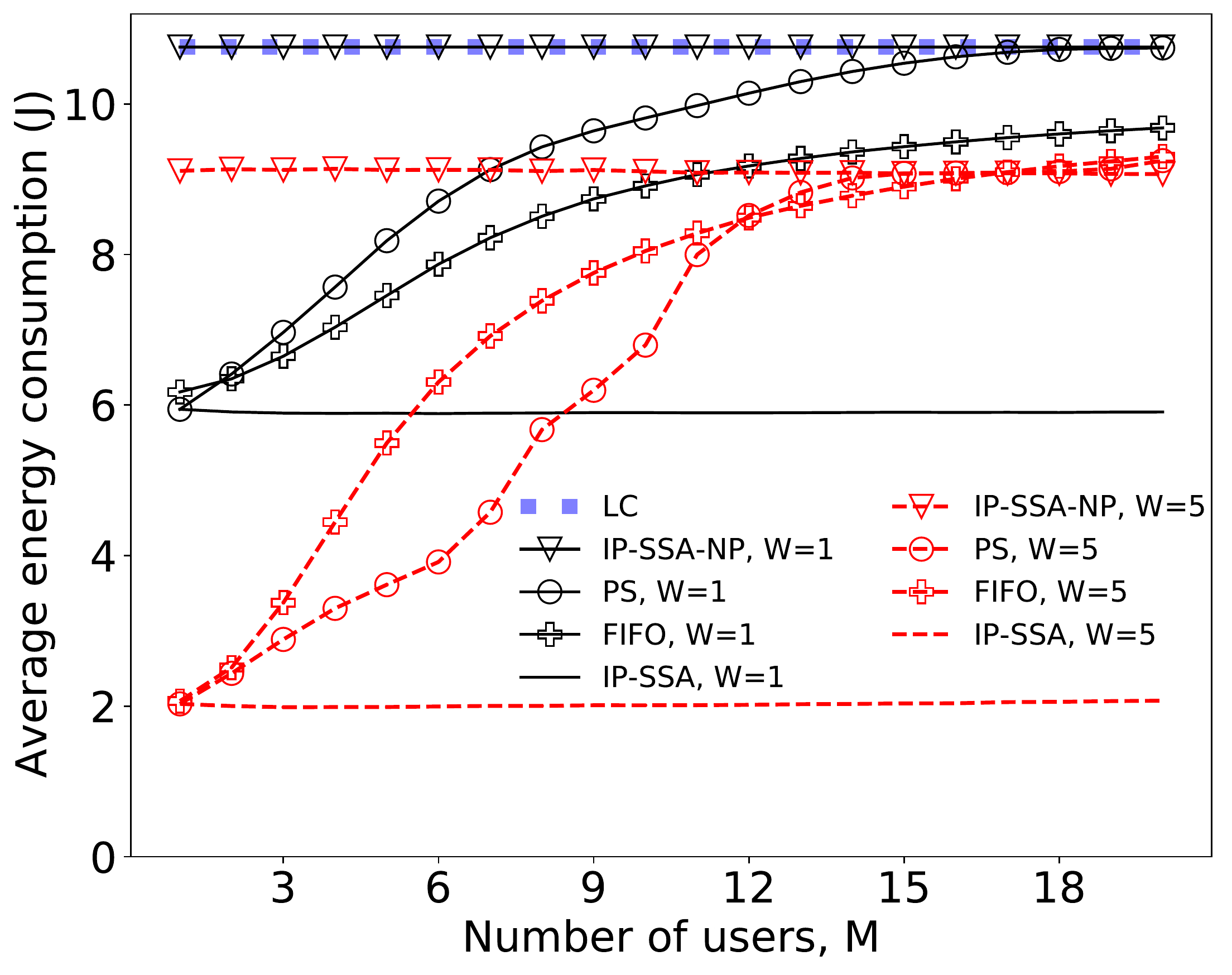}}
\label{o_b_mobilenet}
\caption{Average energy consumption per user v.s. the number of users under different wireless bandwidth (i.e., $W$, in MHz). Fig. \ref{o_b} (a) and (b) show the results for 3dssd and mobilenet-v2, respectively.}
\label{o_b} 
\vspace{-8pt}
\end{figure}

Fig. \ref{o_b} (a) and Fig. \ref{o_b} (b) show the average energy consumption per user w.r.t. the number of users for 3dssd and mobilenet-v2, respectively.
First, IP-SSA, IP-SSA-NP, PS, and FIFO consume less user energy with more wireless bandwidth (i.e., $W$, in MHz), since the transmission latency can be reduced.
This result suggests that the feature compression techniques \cite{ko2018edge,shi2019improving,shao2020communication,jankowski2020joint,shao2021learning} are beneficial for multi-user co-inference.
When the wireless bandwidth is the same, IP-SSA outperforms PS and FIFO, especially when the number of users is large. 
For 3dssd, IP-SSA reduces up to 40.6\% and 51.7\% user energy consumption compared to FIFO and PS when $W=1$ MHz and $M=15$, and reduces up to 93.1\% and 94.9\% user energy consumption compared to FIFO and PS when $W=5$ MHz and $M=15$.
The reason is that IP-SSA fully utilizes the computing resources of the edge server via batch processing, and more sub-tasks can be offloaded.
Comparing Fig. \ref{o_b} (a) and Fig. \ref{o_b} (b), we notice that the performance is closely related to the DNN architectures.
For 3dssd, IP-SSA-NP performs the same as IP-SSA, since the intermediate data for 3dssd is larger than the input data.
While for mobilenet-v2, DNN partitioning greatly reduces the user energy consumption.
For mobilenet-v2 with $W=1$ MHz, IP-SSA-NP cannot utilize the edge server, and thus performs the same as LC.
It is also shown that for mobilenet-v2, the performance of IP-SSA is not sensitive to the number of users.
The reason is that mobilenet-v2 is a light-weight DNN, and the edge inference latency is not sensitive to the batch size as shown in Fig. \ref{l_b}.
However, for 3dssd with $W=1$ MHz, IP-SSA consumes more energy per user when the number of user is large.
Since edge inference latency increases with the batch size, the batch starting time $s^*_k$ decreases, and thus fewer users can complete uploading the intermediate data before $s^*_k$, and more sub-tasks need to be locally processed.
Further, according to the result of IP-SSA for 3dssd with $W=5$ MHz, this phenomenon is less obvious when the users have more bandwidth, since the transmission latency can be reduced.

\begin{figure}[!t]
\setlength{\abovecaptionskip}{2pt}
\setlength{\belowcaptionskip}{2pt}
\centering
\subfloat[]{
\includegraphics[width=0.45\linewidth]{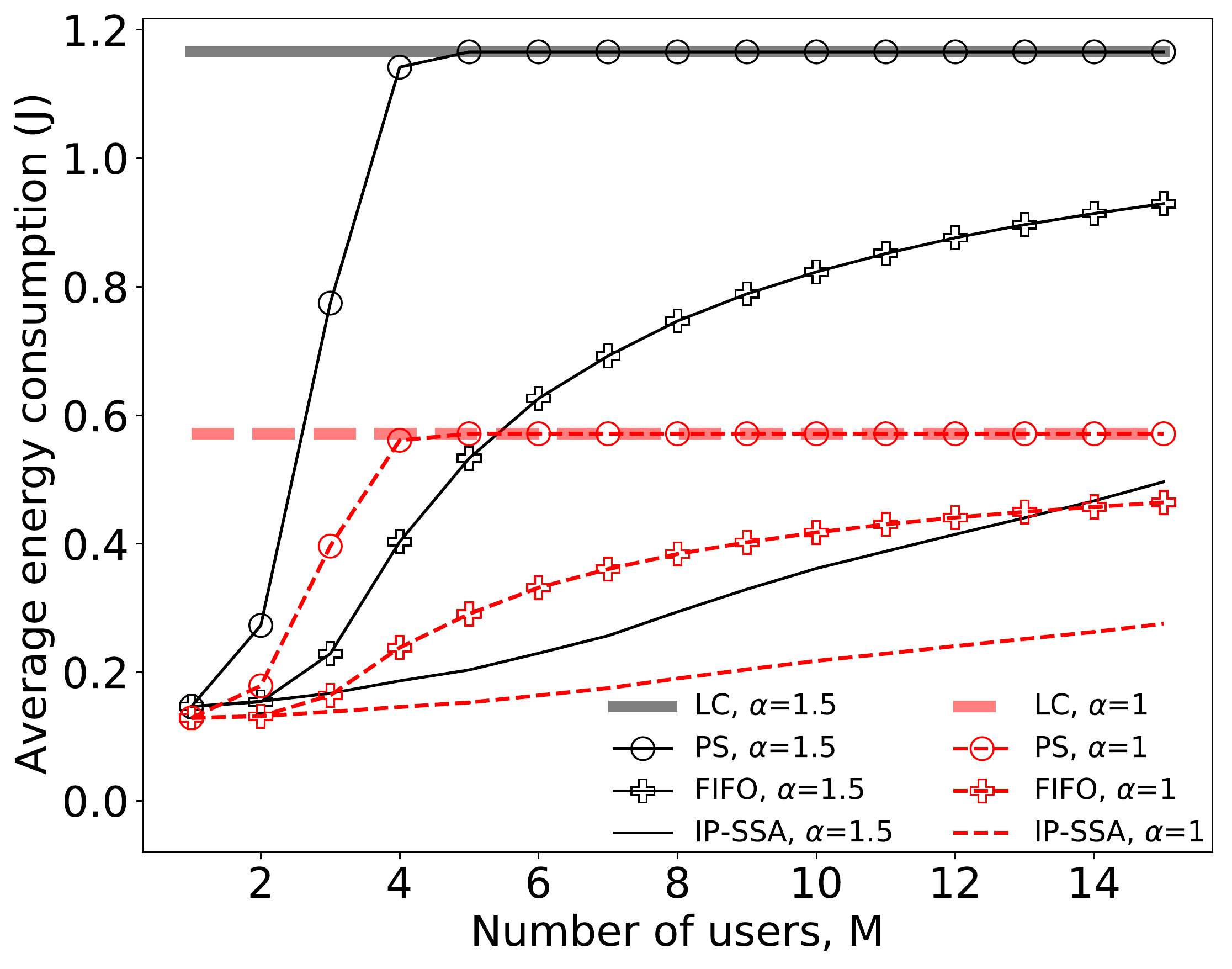}}
\label{o_r_3dssd}
\hfill
\subfloat[]{
\includegraphics[width=0.45\linewidth]{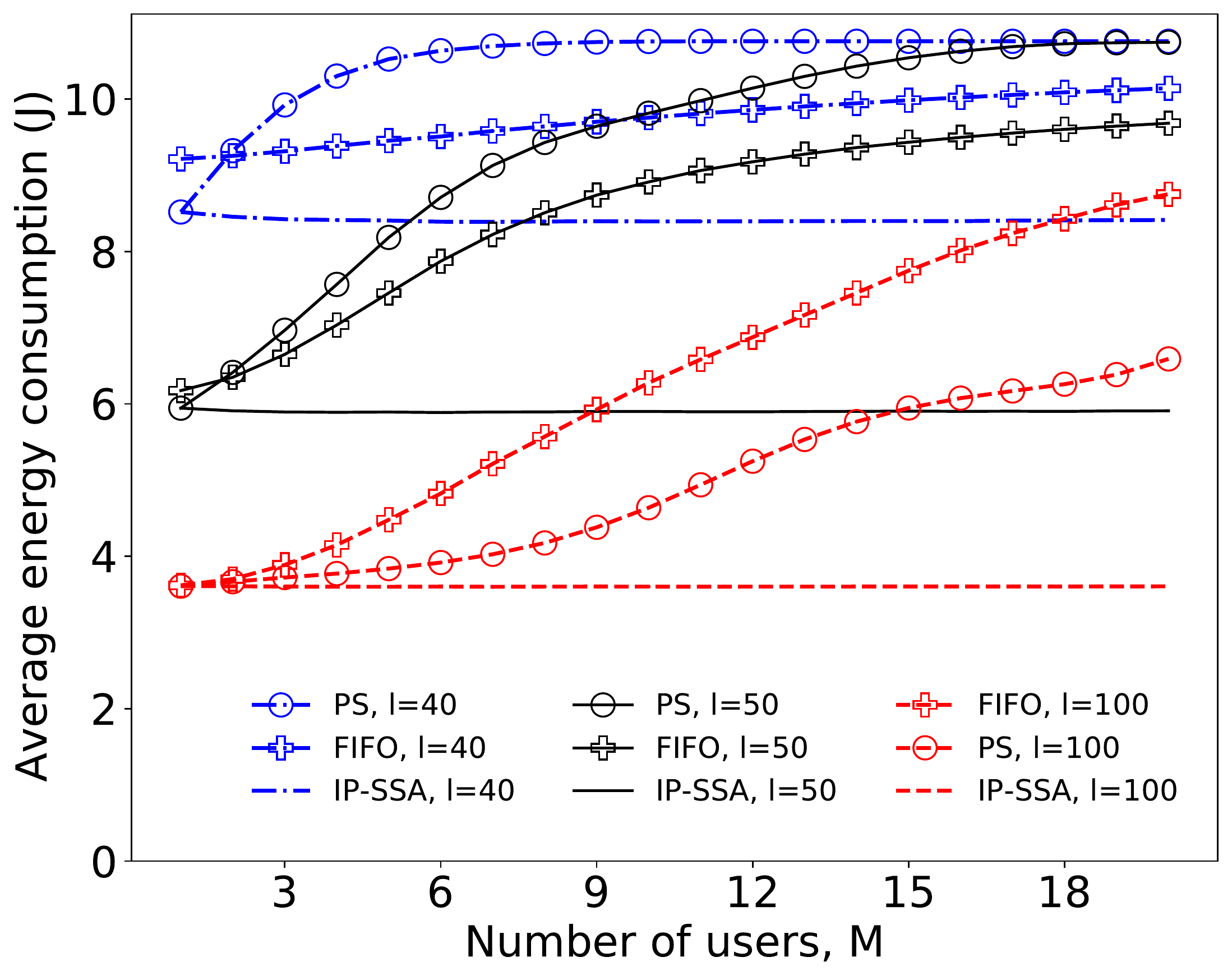}}
\label{o_l_mobilenet}
\caption{(a) shows the average energy consumption per user v.s. the number of users for 3dssd, under different mobile GPU computing capability parameters (i.e., $\alpha$). (b) shows the average energy consumption per user v.s. the number of users for mobilenet-v2, under different latency constraints (i.e., $l$).}
\label{ratio_latency} 
\vspace{-8pt}
\end{figure}

We evaluate the performance of different mobile GPU computing capability parameters $\alpha$ for 3dssd, and the results are reported in Fig. \ref{ratio_latency} (a).
According to \eqref{alpha}, a larger $\alpha$ means that the mobile GPU has weaker computing capability, and thus consumes more energy to complete the inference task within the same latency.
Fig. \ref{ratio_latency} (a) shows that when the number of users is small, the performance for different $\alpha$ is similar.
While the performance gap between different $\alpha$ becomes larger for more users, since the computing capability of edge server is fixed and thus more inference task should be processed locally.
Fig. \ref{ratio_latency} (a) also reveals two possible ways to reduce the user energy consumption.
On one hand, mobile GPUs with smaller $\alpha$ can be deployed to reduce user energy consumption, at the cost of more expensive mobile GPUs.
On the other hand, deploying more GPUs on edge server can also reduce the energy consumption per user by reducing the number of users served by each GPU.

The results of different inference task latency constraints for mobilenet-v2 are shown in Fig. \ref{ratio_latency} (b).
It is shown that the user energy consumption is sensitive when the latency constraint is low.
For IP-SSA with $M=10$, when the latency constraint is reduced from 100 ms to 50 ms, the average energy consumption increases by 2.57 J, and a further 10 ms latency constraint reduction needs 2.34 J.
\textcolor{black}{The results of the average batch size for each sub-task are shown in Table \ref{average_batch_size}.
As shown by Table III, the average batch size for the sub-tasks in the front part of mobilenet-v2 is smaller than that of the sub-tasks in the rear part, and the average batch size increases with the latency constraint.
These findings are consistent with Theorem 1 and the proposed IP-SSA algorithm.}
Moreover, the results of the distribution of user energy consumption are shown in Fig. \ref{e_d}.
The overlapped areas in the left-hand side bars in Fig. \ref{e_d} (a) and Fig. \ref{e_d} (b) show that for the FIFO policy, the high-priority users can have similar performance as IP-SSA by offloading more sub-tasks.
However, the red bars on the right show that in order to serve these high-priority users, the FIFO policy sacrifices other users that can only conduct the inference task locally with high energy consumption.
On the other hand, PS ensures fairness among users, since the user energy consumption is similar.
According to Fig. \ref{ratio_latency} (b), although PS performs better than FIFO when $l=100$ ms, such fairness can greatly increase the user energy consumption when the latency constraint is low.
As shown in Fig. \ref{e_d} (a), the scarce edge computing resource shared by each user may not ensure the stringent latency constraint, and thus more inference tasks need to be processed locally.
In contrast, the proposed IP-SSA can ensure both fairness and efficiency via batch processing.

\begin{figure}[!t]
\setlength{\abovecaptionskip}{2pt}
\setlength{\belowcaptionskip}{2pt}
\centering
\subfloat[]{
\includegraphics[width=0.42\linewidth]{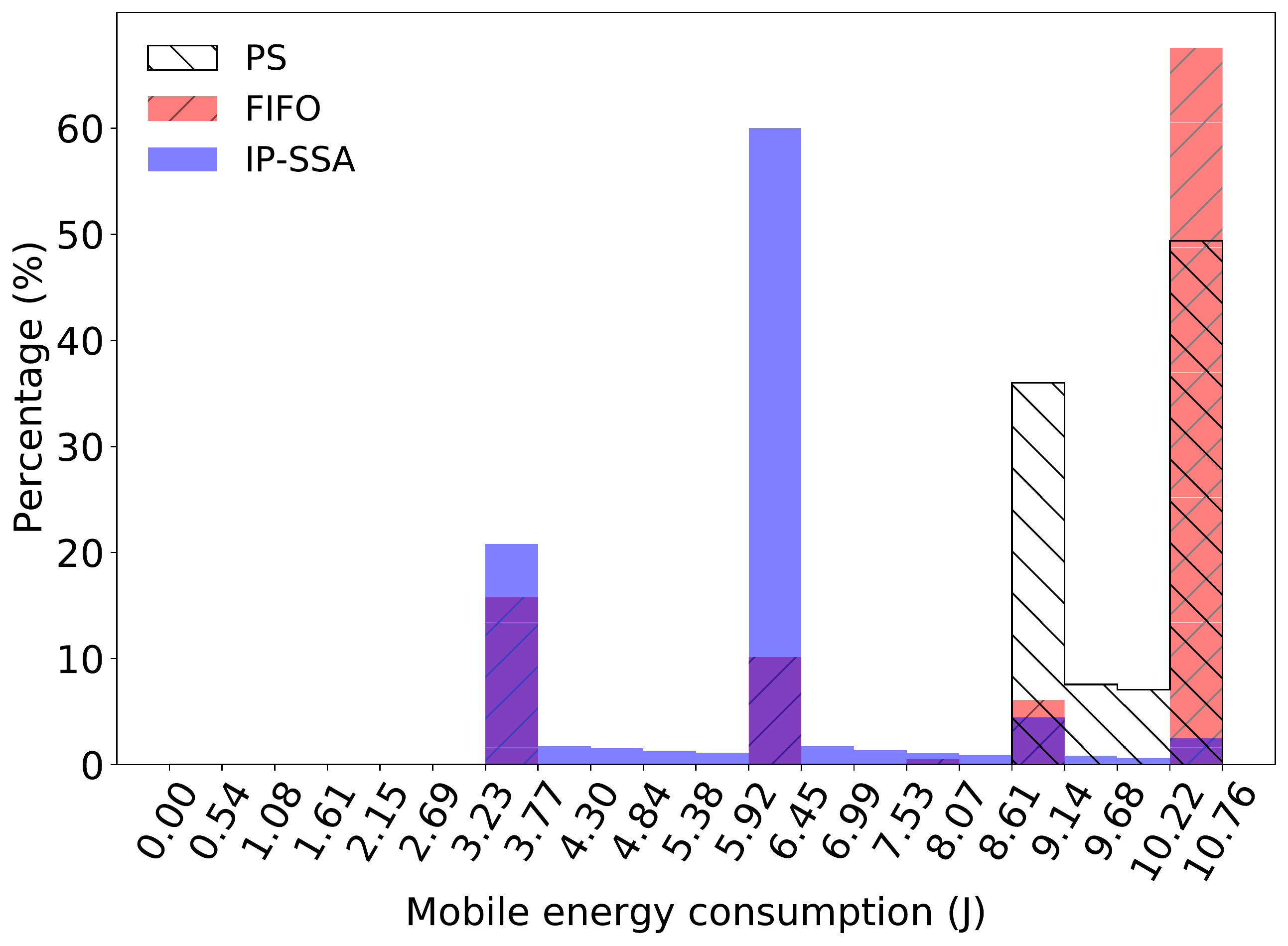}}
\label{e_d_50}
\hfill
\subfloat[]{
\includegraphics[width=0.42\linewidth]{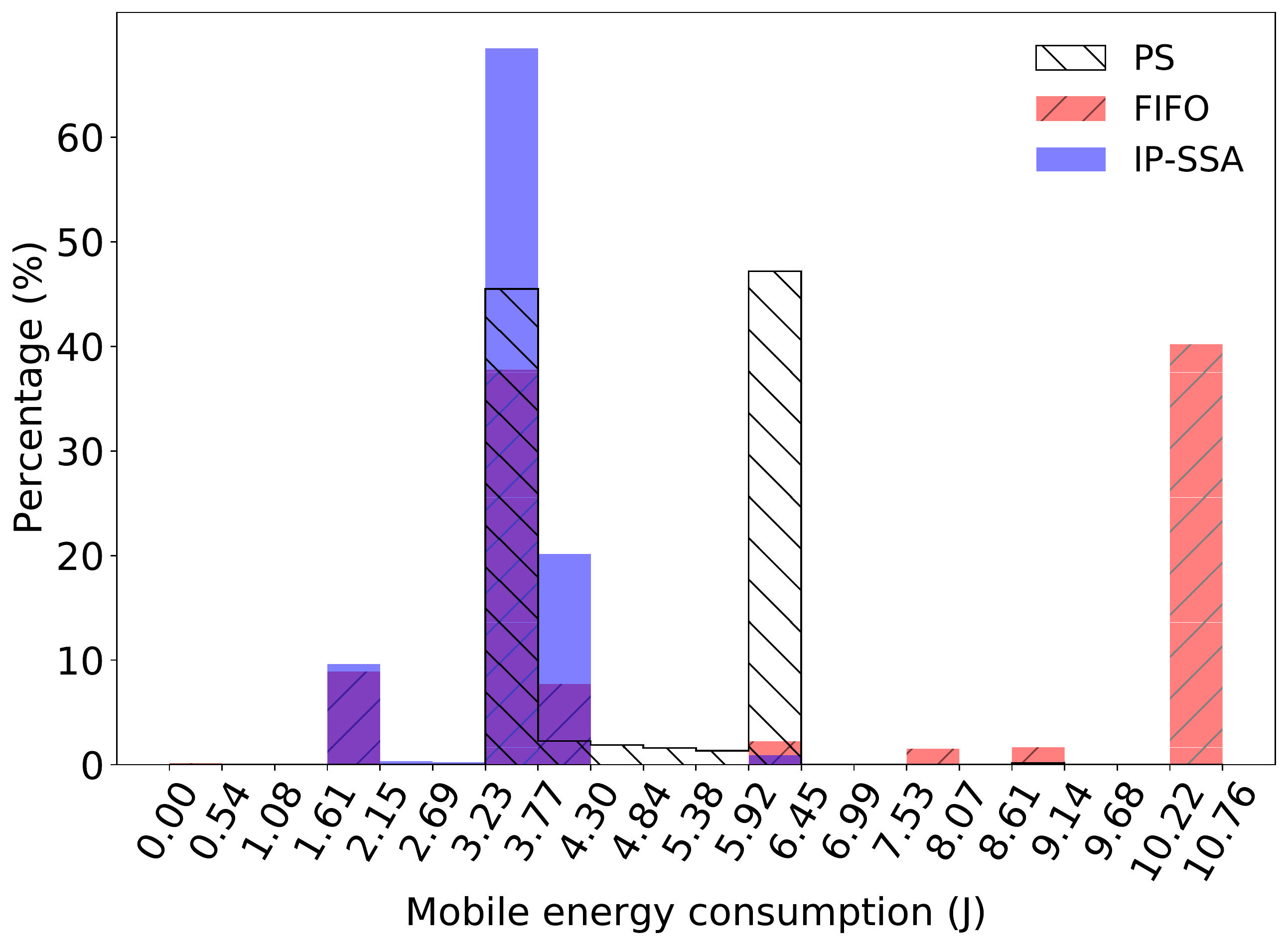}}
\label{e_d_100}
\caption{User energy consumption distributions when $M=10$. Fig. \ref{e_d} (a) is for $l=50$ ms, and Fig. \ref{e_d} (b) is for $l=100$ ms.}
\label{e_d} 
\vspace{-8pt}
\end{figure}

\begin{table}[!t]
\setlength{\abovecaptionskip}{2pt}
\setlength{\belowcaptionskip}{2pt}
\caption{\textcolor{black}{Average Batch Size of Each Sub-task for Mobilenet-v2 when $M=10$}}
\label{average_batch_size}
\begin{center}
\begin{tabular}{c c c c c c c c c }
\hline
& C+B1 & B2 & B3 & B4 & B5 & B6 & B7 & CLS \\
\hline
$l=40$ ms & $0.0$ & $0.0$ & $2.0 \times 10^{-4}$ & $6.6 \times 10^{-1}$  & $5.98$ & $5.98$ & $5.98$ & $5.98$ \\
$l=50$ ms & $0.0$ & $0.0$ & $2.8 \times 10^{-3}$ & $2.67$ & $9.22$ & $9.22$ & $9.80$ & $9.80$ \\
$l=100$ ms & $1.1 \times 10^{-2}$ & $1.1 \times 10^{-2}$ & $1.05$ & $9.91$ & $10.0$ & $10.0$ & $10.0$ & $10.0$ \\
\hline
\end{tabular}
\end{center}
\vspace{-15pt}
\end{table}

\subsection{Online Setting}

In this section, the proposed DDPG agent is evaluated.
\textcolor{black}{We assume that the latency constraint $l$ of each arrived task follows a uniform distribution in $[l_\text{low}, l_\text{high}]$ \cite{8493149}.
Two different task arrival processes are considered.
For Bernoulli-based task arrival, the probability of a task arriving in a typical time slot is $p_\text{arrive}$ if the time slot is before the latency constraint of the last arrived task, and zero otherwise.
We also consider a task arrival process that for each user, once the last arrived task reaches its latency constraint, a new task will arrive immediately at the next time slot (can be viewed as a special case of the Bernoulli-based arrival with $p_\text{arrive}=1$).
}
We use two 3-layer multilayer perceptron (MLP) models in our DDPG agent, one is the actor network and the other is the critic network. 
The two MLPs have the same architecture, where each hidden layer has 128 hidden nodes.
The parameters used in the experiment and DDPG training are listed in Table \ref{parameter_online}.

\begin{table}[!t]
\setlength{\abovecaptionskip}{2pt}
\setlength{\belowcaptionskip}{2pt}
\caption{System Parameters of the Online Setting and DDPG Training}
\label{parameter_online}
\begin{center}
\begin{tabular}{|c |c |c |c |c | c|}
\hline
\textbf{Parameter} & \textbf{Value} & \textbf{Parameter} & \textbf{Value} & \textbf{Parameter} & \textbf{Value} \\
\hline
\textcolor{black}{$T$} & \textcolor{black}{25 ms} & Episode length & 1000 s & $[l_\text{low}, l_\text{high}]$  &  mobilenet: $[0.05, 0.2]$ s; 3dssd: $[0.25, 1.0]$ s  \\
Optimizer  & Adam &  Target smoothing & 0.005 & \textcolor{black}{$p_{\text{arrive}}$} & \textcolor{black}{mobilenet: 0.25; 3dssd: 0.05}   \\
Batch size & 128 & Exploration noise & 0.1 & Learning rate & actor: 0.0001; critic: 0.001  \\
Discount & 0.99 & Updates per step & 200 & Reply buffer size & 1000000  \\
\hline
\end{tabular}
\end{center}
\vspace{-15pt}
\end{table}

\textcolor{black}{We denote the proposed policy as DDPG-OG, and compare it  with the following benchmarks:}
\begin{itemize}
    \item
    All users locally conduct the inference tasks, and is denoted by LC.
    \item
    \textcolor{black}{Calling IP-SSA or OG in some fixed values of time window (TW) is also considered.
    For example, $\text{TW}=2$ means that once the edge server completes the offloaded tasks and becomes idle, it will call IP-SSA or OG again after waiting for 2 time slots.}
    \item
    \textcolor{black}{A DDPG agent that is trained to call the IP-SSA algorithm (denoted by DDPG-IP-SSA).}
\end{itemize}

\begin{figure}[!t]
\setlength{\abovecaptionskip}{2pt}
\setlength{\belowcaptionskip}{2pt}
\centering
\subfloat[]
{\includegraphics[width=0.32\linewidth]{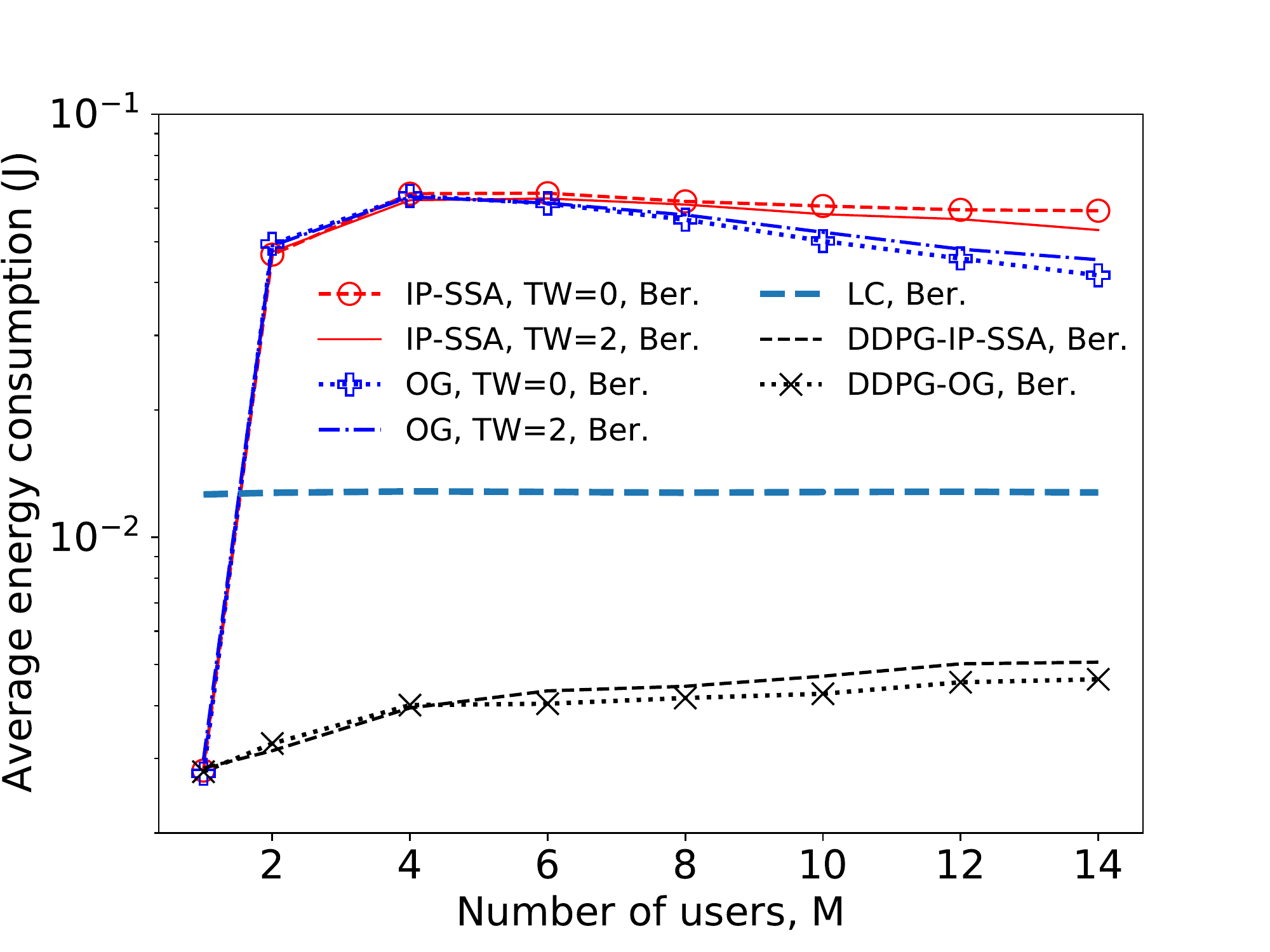}
\label{online_3dssd_Bernoulli}}
\hfil
\subfloat[]
{\includegraphics[width=0.32\linewidth]{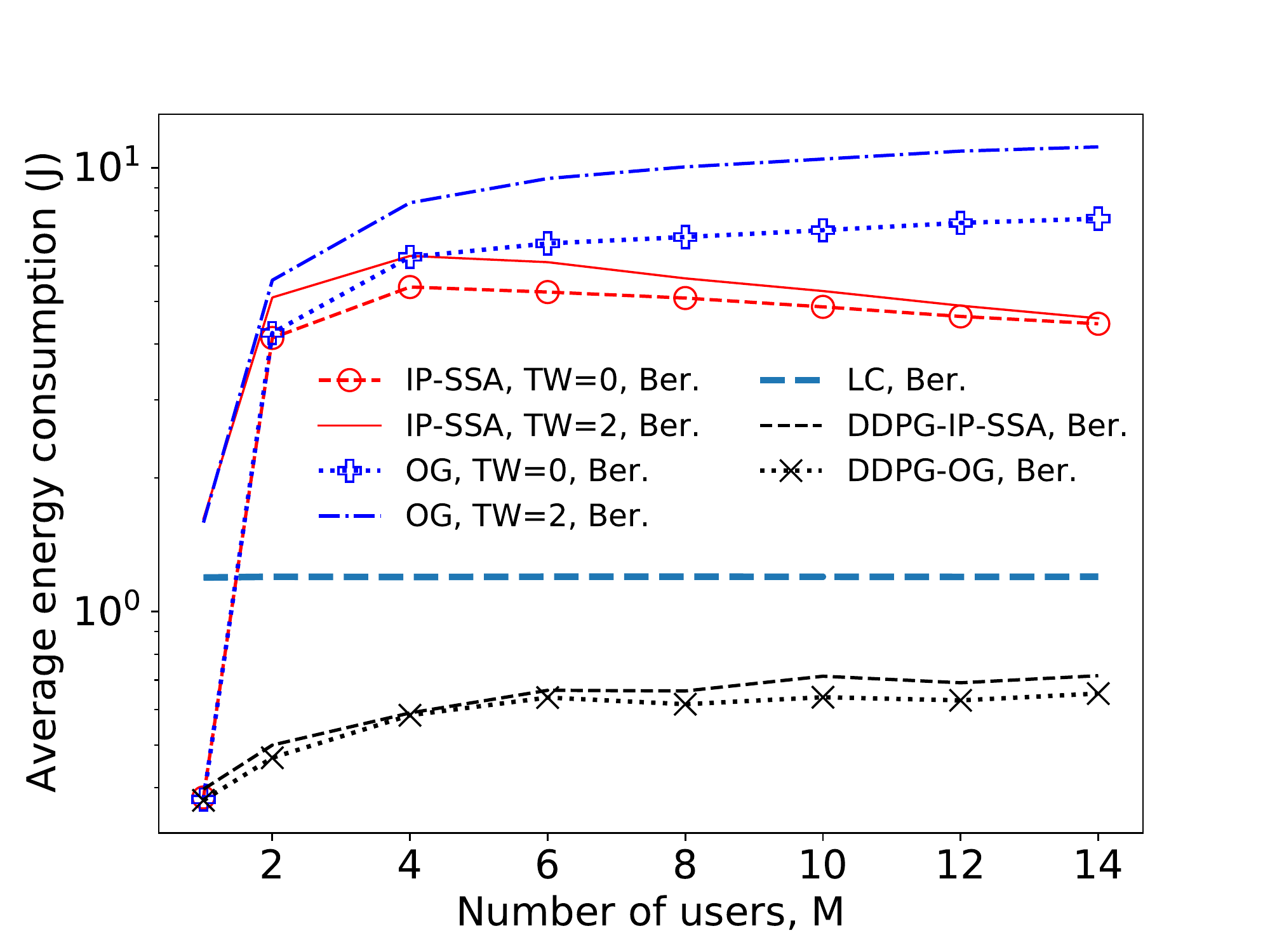}
\label{online_mobilenetv2_Bernolli}}
\hfil
\subfloat[]{
\includegraphics[width=0.32\linewidth]{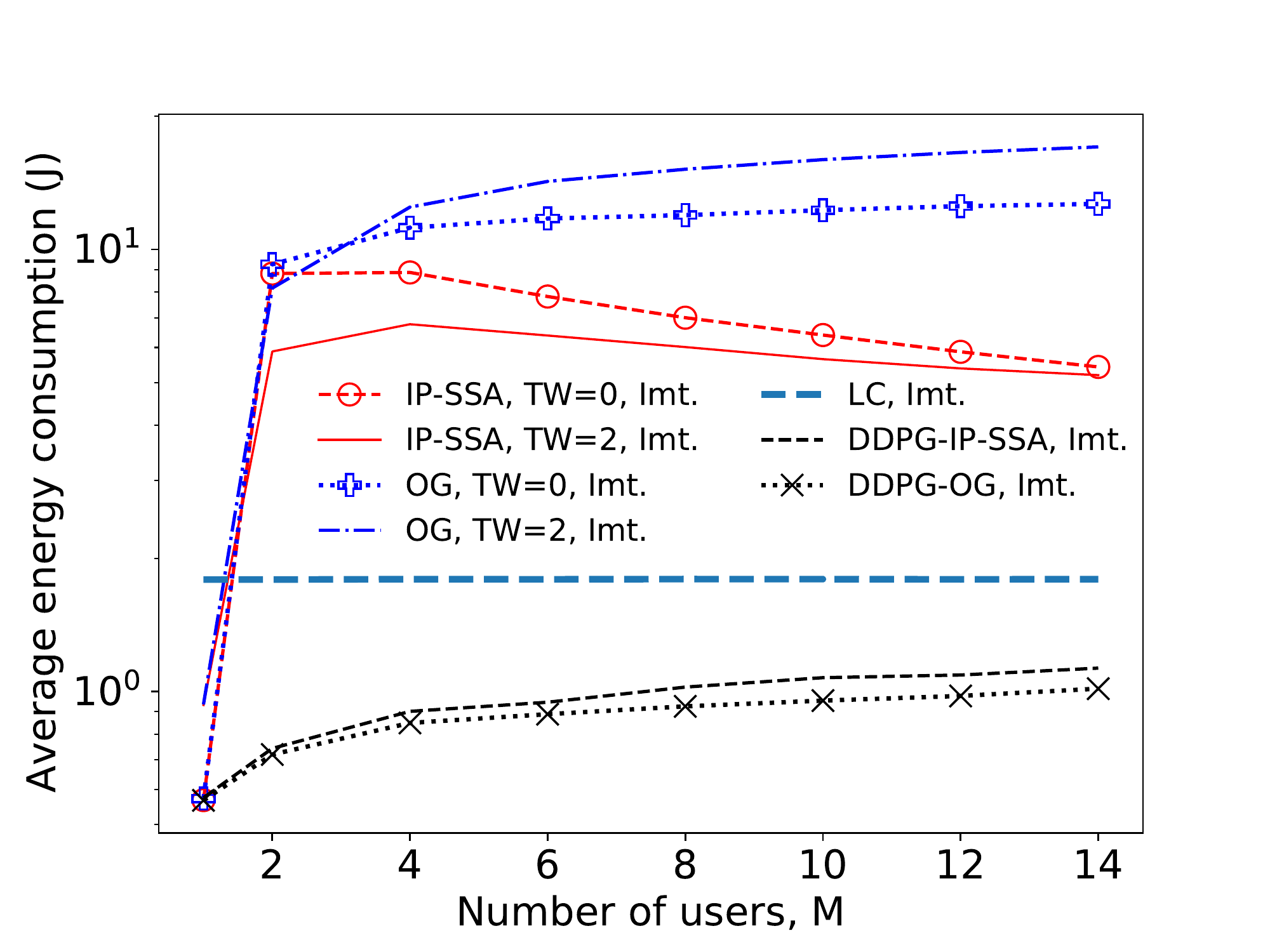}
\label{online_mobilenetv2_Intermediate}}
\hfil

\caption{\textcolor{black}{The average energy consumption per user per slot for all policies. Fig.\ref{online} (a) is for 3dssd, and Fig.\ref{online} (b), (c) are for mobilenet-v2. The task arrival process in Fig.\ref{online} (a), (b) is the Bernoulli-based arrival process (Ber. for short), and the task arrives immediately after the latency constraint of the last task in Fig.\ref{online} (c) (Imt. for short).}}
\label{online}
\vspace{-10pt}
\end{figure}

\begin{table}[!t]
\setlength{\abovecaptionskip}{2pt}
\setlength{\belowcaptionskip}{2pt}
\caption{\textcolor{black}{Average Results Under the Online Setting with $M=14$ and the Bernoulli-based Arrival}}
\label{results_M_14_Ber_arrival}
\begin{center}
\begin{tabular}{c c c c c c c}
\hline
& \multicolumn{3}{c}{3dssd} & \multicolumn{3}{c}{mobilenet-v2} \\
\cline{2-7}
& DDPG-OG & DDPG-IP-SSA & OG, TW=0 & DDPG-OG & DDPG-IP-SSA & OG, TW=0\\
\hline
Latency of DDPG (ms) & 0.24 & 0.24 & N.A. & 0.26 & 0.23 & N.A. \\
Latency of offline Alg. (ms) & 1.71  & 0.29  & 6.04  &  6.55  & 0.62  & 28.65 \\
Number of tasks  & 2.78 & 3.11 & 4.34 & 3.99 & 3.93 & 6.45 \\
Number of tasks per group & 2.56 & N.A. &  2.57 & 2.97 & N.A. & 2.08\\
\hline
\end{tabular}
\end{center}
\vspace{-15pt}
\end{table}

\textcolor{black}{
Fig. \ref{online} shows the average energy consumption per user per slot in one episode of all policies, after 500 episodes of training for each DDPG agent.
While Table \ref{results_M_14_Ber_arrival} shows the average execution latency of the DDPG agent and offline algorithms, the average number of tasks when calling the offline algorithms, and the average number of tasks in each group for the OG algorithm (if exists).
First, we notice that the fixed time window does not perform well when $M\geq 2$, since the fixed time window cannot seek the balance between serving arrived tasks and reserving resources for future tasks.
As shown in Table \ref{results_M_14_Ber_arrival}, 
for OG with $\text{TW}=0$, the average number of tasks when calling OG is much higher than that of DDPG-OG, indicating that the edge occupation period is too long.
However, via the proposed two-dimensional action, the proposed DDPG agent can adaptively balance the trade-offs between the waiting latency and the batch size, and between the processing time of current batch and the idle period.
As a result, under both the Bernoulli-based and immediate task arrivals, DDPG-based policies outperform other baselines.
Moreover, the OG algorithm can derive the optimal grouping policy, and the tasks with loose latency constraints can be processed in different batches with the tasks that have stringent latency constraints, which can improve the performance compared to IP-SSA.
The performance gap between DDPG-OG and DDPG-IP-SSA increases with the number of users.
When $M=14$, DDPG-OG can save up to 8.92\% and 8.85\% user energy consumption compared to DDPG-IP-SSA for 3dssd and mobilenet-v2 under the Bernoulli-based task arrival.
On the other hand, the main drawback of DDPG-OG is the high execution latency of OG.
As shown in Table \ref{results_M_14_Ber_arrival}, when the number of tasks when calling OG is large, the execution latency may exceed the time slot length (e.g., OG with $\text{TW}=0$ for mobilenet-v2).
Therefore, in the scenarios that the number of users is very large, DDPG-IP-SSA is preferred due to its low complexity.
}

\section{Conclusion}
In this paper, we have proposed a framework to jointly optimize DNN inference task offloading and offloaded task scheduling, for multi-user co-inference with batch processing capable edge server.
The problem of user energy consumption minimization under inference latency constraints is systematically solved, for both the offline and online scenarios.
Specifically, we propose IP-SSA that offloads sub-tasks of each user independently and schedules all the same sub-tasks in the same batch for tasks with the same latency constraint, and OG that groups the tasks with similar latency constraints for tasks with different latency constraints.
The experiment results reveal that IP-SSA and OG can greatly reduce user energy consumption by batch processing.
Further, DDPG-OG is proposed for the online scenario, where an RL agent is trained to control the trade-off between serving the arrived tasks and reserving resources for future tasks via the proposed two-dimensional control.
\textcolor{black}{As future work, large-scale edge inference systems with multiple servers can be further considered, in which the low-complexity distributed algorithms for user association, load balancing, and the queueing scheduling of batch processing might prove important.}

\appendices
\section{Proof of Theorem \ref{theorem1}} \label{appendix1}
Consider a typical optimal solution $x^*, s^*, t^*, f^*$, we prove that after the following modifications, it can be converted to an optimal solution that satisfies Theorem \ref{theorem1}.

First, consider a set of new batches, with starting time $s'_n, n\in \{1, \dots, N\}$, which completes the entire inference task just at the latency constraint $l$, i.e.,
\begin{equation}
\left\{
\begin{aligned}
    s'_N &=  l - F_N(1), \\
    s'_{N-1}   &=  s'_N - F_{N-1}(1), \\
    &\vdots \\
    s'_1  &= s'_2 - F_1(1).
\end{aligned}
\right.
\end{equation}
Since the edge server has stronger computing capability than mobile devices, $s'_n$ can be treated as the latest staring time that can ensure the latency constraint.

For those users that offload the $N$-th sub-task, i.e., $\mathcal{M}_N = \{m|x^*_{m,N,0}=0\}$, the starting time of the batches that process the $N$-th sub-task in the original solution is no later than $s'_N$, i.e.,
\begin{equation}
    s^*_k \leq s'_N, \ \forall k \in \{k|x^*_{m,N,k}=1, m\in\mathcal{M}_N \ \text{and} \ k\geq 1 \}.
\end{equation}
Therefore, we can let the users that complete the $(N-1)$-th sub-task earlier than $s'_N$ to wait, and aggregate all $N$-th sub-tasks into a batch with starting time $s'_N$.
Such modification ensures the latency constraint, and does not increase the user energy consumption.
As a result, Theorem \ref{theorem1} holds for all $N$-th sub-tasks after taking such modification.

Then, we complete the proof by induction.
Suppose Theorem \ref{theorem1} holds for all sub-tasks after the $n$-th sub-task.
Here, we still use $x^*, s^*, t^*, f^*$ to denote the optimal solution after the modification of the $n$-th to $N$-th sub-tasks.
For the $(n-1)$-th sub-task, one of the following holds: 
\begin{itemize}
    \item Both two sub-tasks are locally processed;
    \item The $(n-1)$-th sub-task is locally processed and the $n$-th is offloaded;
    \item Both two sub-tasks are offloaded;
    \item The $(n-1)$-th sub-task is offloaded and the $n$-th is locally processed.
\end{itemize}
It is obvious that for the first two cases, Theorem \ref{theorem1} holds for all sub-tasks after the $(n-1)$-th sub-task.
For the third case, since the starting time of the batches that process the $(n-1)$-th sub-task is no larger than $s'_{n-1}$, we can also aggregate all $(n-1)$-th sub-task into a batch with starting time $s'_{n-1}$, just like the previous modification to the $N$-th sub-task.
For the last case, assume that for the $m$-th user, the sub-tasks from $n$ to $n'$ are locally processed, and the sub-tasks from $n'+1$ to $N$ are offloaded.
Since Theorem \ref{theorem1} holds for all sub-tasks after the $n$-th sub task, we have
\begin{equation}
    t^*_{m,n-1} + \frac{B_{n-1}}{R^\text{d}_m} + \sum_{i=n}^{n'} \frac{A_i}{f^*_m} + \frac{B_{n'}}{R^\text{u}_m} \leq  s'_{n'+1},
\end{equation}
where the left-hand side is the ready time of the $(n'+1)$-th sub-task, and the right hand is the batch starting time of the $(n'+1)$-th sub-task.
Therefore, we have 
\begin{equation}
    \begin{split}
    \label{t_m,n-1}
    t^*_{m,n-1} & \leq  s'_{n'+1} - \left(\frac{B_{n-1}}{R^\text{d}_m} + \sum_{i=n}^{n'} \frac{A_i}{f^*_m} + \frac{B_{n'}}{R^\text{u}_m} \right)  \\
    & < s'_{n'+1} - \sum_{i=n}^{n'} \frac{A_i}{f^*_m}  \\
    & \leq s'_{n'+1} - \sum_{i=n}^{n'} F_i(1)  \\
    & = s'_n,
    \end{split}
\end{equation}
where the last inequality is because the edge server has stronger computing capabilities than the mobile devices.
We can offload all sub-tasks between $n$ and $n'$ of the $m$-th user, and the modified solution is still feasible due to \eqref{t_m,n-1}.
Since the user energy consumption of the modified solution is strictly reduced due to the offloading, it conflicts with the assumption that $x^*, s^*, t^*, f^*$ is optimal.
Therefore, the forth case will not happen.
As a result, Theorem \ref{theorem1} holds for all sub-tasks after the $(n-1)$-th sub task after the modification for the first three cases.

Finally, Theorem \ref{theorem1}. (1) and (2) are proved by induction.
Furthermore, due to that the user energy consumption is a decreasing function of local computing frequency $f$, Theorem \ref{theorem1}. (3) is proved immediately.

\section{Proof of Theorem \ref{subsequent_users}} \label{appendix2}
Suppose for an optimal solution $\mathcal{G}^*_1, \mathcal{G}^*_2, \dots, \mathcal{G}^*_g$, there exist $j\in \mathcal{G}^*_p, k\in \mathcal{G}^*_q$, such that $j>k$ and $p<q$.
Then we can move user $j$ to group $\mathcal{G}_q$ from group $\mathcal{G}_p$.
We first prove that the new solution ${\mathcal{G}^*_1}', {\mathcal{G}^*_2}', \dots, {\mathcal{G}^*_g}'$ after the movement is feasible.
Due to \eqref{tildel} and \eqref{nonoverlap}, we have $\tilde{l}_p < \tilde{l}_q \leq l_k \leq l_j$. 
Therefore, moving user $j$ to group $\mathcal{G}_q$ does not change the latency constraints of both two groups, i.e., ${\tilde{l}_p}'=\tilde{l}_p$ and ${\tilde{l}_q}'=\tilde{l}_q$, and thus the feasibility holds.
On the other hand, since ${\tilde{l}_p}'< {\tilde{l}_q}'$, the movement does not increase the user energy consumption, and thus the new solution after the movement is still optimal.

We can repeat such movement several times, until there is no $j\in \mathcal{G}^*_p$ and $k\in \mathcal{G}^*_q$ satisfy that $j>k$ and $p<q$.
Obviously, the number of repetitions is finite.
Therefore, we can prove that there exists an optimal solution, for any two users in different groups, the index of the user in the former group is smaller than that of the user in the latter group.
This result is equivalent to Theorem \ref{subsequent_users}.

\bibliographystyle{IEEEtran}
\bibliography{reference}

\end{document}